\documentclass[submission,copyright,creativecommons]{eptcs}

\providecommand{\event}{GandALF 2026}

\usepackage{iftex}

\ifpdf
  \usepackage{underscore}         
  \usepackage[T1]{fontenc}        
\else
  \usepackage{breakurl}           
\fi

\usepackage{amsmath}
\usepackage{amssymb}
\usepackage{amsthm}
\usepackage{thmtools}
\usepackage[capitalise]{cleveref}
\usepackage{hypcap}
\usepackage{graphicx}
\usepackage{tikz}
\usetikzlibrary{arrows.meta,positioning}

\newtheorem*{question}{Question}

\newcommand{\PSPACE}{\text{PSPACE}}
\newcommand{\PTIME}{\text{PTIME}}
\newcommand{\cocompl}{\text{co}}
\newcommand{\NP}{\text{NP}}

\newcommand{\MP}{\text{MeanPayoff}}

\newcommand{\Parity}{\text{Parity}}

\newcommand{\ModCycle}{\text{Mod-$k$}}

\newcommand{\HamiltonP}{\textsc{Hamiltonian Path}}

\newcommand{\QBF}{\textsc{QBF}}
\newcommand{\QBFcnf}{\textsc{QBF-3CNF}}

\newcommand{\Eve}{\exists}
\newcommand{\Adam}{\forall}

\newcommand{\Ve}{V_{\Eve}}
\newcommand{\Va}{V_{\Adam}}
\newcommand{\A}{\mathcal{A}}
\newcommand{\G}{\mathcal{G}}

\newcommand{\Val}{\text{Val}}

\newcommand{\True}{\textsc{True}}
\newcommand{\False}{\textsc{False}}

\newcommand{\Cond}{W}
\newcommand{\Obj}{\Omega}

\newcommand{\Colors}{C}

\newcommand{\play}{\rho}
\newcommand{\Plays}{\text{Plays}}
\newcommand{\MPlays}{\text{MPlays}}

\newcommand{\str}{\sigma}
\newcommand{\Path}{\pi}

\newcommand{\strE}{\sigma}
\newcommand{\strA}{\tau}

\newcommand{\Pay}{\text{Pay}}

\newcommand{\val}{\text{val}}

\newcommand{\GPay}{\G = (\A, \Pay)}
\newcommand{\GNoPay}{\G = (\A, \Cond)}

\theoremstyle{plain}
\newtheorem{theorem}{Theorem}
\newtheorem{lemma}[theorem]{Lemma}
\newtheorem{proposition}[theorem]{Proposition}

\newtheorem{claim}[theorem]{Claim}

\theoremstyle{definition}
\newtheorem{definition}[theorem]{Definition}

\theoremstyle{remark}
\newtheorem{remark}[theorem]{Remark}

\newenvironment{claimproof}{\begin{proof}[Proof of claim]}{\end{proof}}

\title{Lasso Games}

\author{
  Krzysztof {Żyndul}
  \institute{
    Institute of Informatics, University of Warsaw\\
    Warsaw, Poland
  }
}

\def\titlerunning{Lasso Games}
\def\authorrunning{K. Żyndul}

\begin{document}

\maketitle

\begin{abstract}
  We study games on finite graphs with labelled or weighted edges in which plays end as soon as the first loop is formed. The resulting play has a lasso shape: a path (head) followed by a cycle (loop). We consider two semantics for determining the winner: the \emph{loop semantics}, where only the loop matters, and the \emph{lasso semantics}, where the entire play is taken into account. We show that adopting the lasso semantics may have a dramatic impact on both the complexity of the game and the memory required to play optimally. Mean-payoff games with lasso semantics are no longer positionally determined, and the threshold problem -- deciding whether the value of the game is at least $0$ -- is $\PSPACE$-complete. In contrast, lasso parity games are equivalent to weak parity games, remain positionally determined and are solvable in polynomial time. Furthermore, we show that if a positionally determined winning objective is closed under cyclic permutation and both the objective and its complement are closed under concatenation, then it is equivalent to the parity objective. Finally, we consider mod-$k$ games, where the winner depends on the length of the path modulo $k \geq 2$. We show that these games, in both semantics, are $\PSPACE$-complete. Alongside computational complexity, we analyze the memory required to play optimally. We establish that for any game positionally determined in loop semantics, memory of size $2^n$ is sufficient in lasso semantics. We complement this by showing exponential memory lower bounds for lasso mean-payoff and mod-$k$ games.
\end{abstract}

\section{Introduction}
Games on graphs constitute a mathematical tool that underlies the model-checking paradigm in program verification. Intensive study of these games over the last three decades has revealed deep connections with automata theory, logic, and computational complexity; see the recent monograph by Fijalkow et al.~\cite{NathBook}. Such games are typically of infinite duration, with the winner determined by an infinitary objective (e.g., a parity or mean-payoff objective). The complexity of algorithms for solving these games depends on both the objective and the structure of the underlying graph.

A number of games admit an equivalent finitary variant, where the game can be stopped in finite time~\cite{McNaughton2000, ZimmermannPDH}. In this context, Aminof and Rubin~\cite{Aminof2017} initiated the study of first-cycle games, where the game ends as soon as a vertex is repeated and the winner is determined by the sequence of labels on the \emph{first cycle}.

In this paper, we consider first-cycle games with an alternative semantics. Instead of looking only at the cycle, the winner is determined by the whole \emph{lasso-shaped} path. Our results show that this change of semantics can have a significant impact on the complexity of the game. We study first-cycle games with lasso semantics under mean-payoff, parity, and mod-$k$ objectives. While the first two are classical objectives, mod-$k$ games arise as a game variant of the algorithmic problem of deciding whether a graph contains a cycle whose length is divisible by a fixed integer $k \geq 2$.

\textbf{Our results.} For \textbf{lasso mean-payoff games}, where the value of the game is the arithmetic mean of the weights on the entire lasso-shaped path, we prove that the \emph{threshold problem} -- deciding whether Eve can ensure a payoff of at least $0$ -- is $\PSPACE$-complete. To prove $\PSPACE$-hardness, we give a reduction from the $\QBF$ problem. When restricted to one-player settings, the problem becomes $\NP$-complete for Eve and $\cocompl\NP$-complete for Adam. On arenas without negative-weight cycles, solving one-player Eve games remains $\NP$-hard, while one-player Adam games can be solved in $\PTIME$ for threshold $0$.

Next, we consider \textbf{lasso parity games}, where the winner is determined by the maximal priority on the lasso-shaped path. These games correspond to the weak parity objective. Thus, they are positionally determined and solvable in $\PTIME$. Additionally, we show that any positionally determined objective that is closed under cyclic permutations, and both the objective and its complement are closed under concatenation, then the objective is equivalent to parity objective.

Finally, we consider \textbf{mod-$k$ games}. We prove that in both the loop and lasso semantics, solving these games is $\PSPACE$-complete for every fixed $k \geq 2$.

\textbf{Memory requirements.} We establish that for any game positionally determined in loop semantics (such as mean payoff), memory of size $2^n$ is sufficient in lasso semantics, where $n$ is the number of vertices. We also establish exponential memory bounds of $2^{\Theta(n)}$ for both lasso mean-payoff and mod-$k$ games.

\section{Preliminaries}\label{sec:preliminaries}
For every positive integer $n$, let $[n] = \{1, \dots, n\}$.

An \emph{edge-labelled directed multigraph} is a triple $(V, E, \Colors)$, where $V$ is a finite set of \emph{vertices}, $\Colors$ is a finite set of \emph{labels}, and $E \subseteq V \times \Colors \times V$ is its set of \emph{edges}. Note that there can be many edges between two vertices, labelled by different labels. For $(u, c, v) \in E$, we write $u \xrightarrow{c} v$. We write $E(u) = \{(u, c, v) \in E\}$ for the set of outgoing edges of $u$. A \emph{path} is a finite or infinite sequence $\Path = v_0, c_0, v_1, c_1, \dots$ such that $(v_i, c_i, v_{i + 1}) \in E$; its length $|\Path|$ is its number of edges. A \emph{cycle} is a non-empty path whose first and last vertices are the same, and it is a \emph{simple cycle} if no other vertex repeats.

A game is played on an \emph{arena} $\A = (V, E, \Ve, \Va, v_0, \Colors)$, where $(V, E, \Colors)$ is a finite edge-labelled directed multigraph with no terminal vertices, i.e., $E(v) \neq \emptyset$ for every $v \in V$; $V$ is partitioned into the sets $\Ve$ and $\Va$ controlled by \emph{Eve} and \emph{Adam}; and $v_0$ is the initial vertex.

Adam and Eve play the game by moving a token along the edges of the arena. The token starts at $v_0$. When the token is at a vertex $u$ controlled by player $p \in \{\Eve, \Adam\}$, i.e., $u \in V_p$, player $p$ chooses an outgoing edge $(u, c, v) \in E(u)$ and moves the token to $v$. In this way, the players generate a finite or infinite path $\play = v_0, c_0, v_1, c_1, v_2, \dots$, called a \emph{play}. We write $\Plays(\A)$ for the set of all plays over the arena $\A$. To determine the outcome of a game, we evaluate \emph{maximal plays}. In infinite-duration games, a maximal play is simply an infinite path. In finite-duration games, a maximal play is a finite path that satisfies the stopping condition of the game. We denote by $\MPlays(\A) \subseteq \Plays(\A)$ the set of all maximal plays over an arena $\A$.

A \emph{strategy} for a player determines their next move at any point in the game. Formally, a strategy $\str$ for player $p \in \{\Eve, \Adam\}$ maps each finite play $\play$ ending in $v \in V_p$ to an edge $\str(\play) \in E(v)$. A strategy $\str$ is called \emph{positional} (or \emph{memoryless}) if it depends only on the current vertex rather than the entire history. Such a strategy can be viewed as a function $\str : V_p \to E$ satisfying $\str(v) \in E(v)$ for every $v \in V_p$. A play $\play = v_0, c_0, v_1, c_1, v_2, \dots$ is consistent with a strategy $\str$ of player $p$ if, for all $i$ such that $v_i \in V_p$, we have $\str(v_0, c_0, v_1, \dots, v_i) = (v_i, c_i, v_{i + 1})$. We write $\play_{\strE, \strA}(v_0)$ for the unique play starting at $v_0$ that is consistent with both Eve's strategy $\strE$ and Adam's strategy $\strA$.

A \emph{qualitative game} is a tuple $\G = (\A, \Cond)$, where $\A$ is an arena and $\Cond \subseteq \MPlays(\A)$ is a \emph{winning condition}. A maximal play $\play$ is won by Eve if it belongs to the winning condition. Otherwise, it is won by Adam. Eve's strategy $\strE$ is a \emph{winning strategy} from a vertex $v_0$ if \emph{every} maximal play starting at $v_0$ and consistent with strategy $\strE$ is won by Eve. A winning strategy for Adam is defined analogously.

The \emph{winning region} of player $p \in \{\Eve, \Adam\}$, denoted $W_p$, is the set of all vertices $v \in V$ from which player $p$ has a winning strategy. A game is \emph{determined} if the winning regions of Eve and Adam partition the vertex set $V$ (i.e., $W_\Eve \cup W_\Adam = V$). A game is \emph{positionally determined} if each player has a positional winning strategy from every vertex in their winning region.

For a fixed set of labels $\Colors$, any set $\Obj \subseteq \Colors^* \cup \Colors^{\omega}$ can be considered a \emph{winning objective} that describes families of games with the same type of winning condition. When evaluating an objective, we identify a play with its sequence of edge labels. Given an arena $\A$ and an objective $\Obj$, the induced winning condition is
\[
    \Cond_\Obj = \{ \play \in \MPlays(\A) \mid \play \in \Obj \}.
\]
When the objective $\Obj$ is clear from the context, we often omit the induced condition and simply say that a play $\play$ satisfies objective $\Obj$.

We focus on a specific class of games where play ends as soon as a first cycle is closed. Given an arena $\A = (V, E, \Ve, \Va, v_0, \Colors)$, a \emph{first-cycle} play is of the form
\[
    \play = v_0, c_0, v_1, c_1, v_2, \dots, c_{j - 1}, v_j, c_j, \dots, v_{\ell_{\play}},
\]
where $\ell_{\play}$ is the first index such that there exists $0 \leq j < \ell_{\play}$ with $v_j = v_{\ell_{\play}}$. The play forms a \emph{lasso}, consisting of the \emph{head} $v_0, c_0, v_1, \dots, c_{j - 1}, v_j$, possibly trivial if $j = 0$, and the \emph{loop} $v_j, c_j, \dots, v_{\ell_{\play}}$, which is a simple cycle.

Given a winning objective $\Obj$ and a first-cycle play $\play$, we consider two semantics to determine whether it is won by Eve:
\begin{itemize}
    \item in the \emph{loop semantics} (as studied by Aminof and Rubin~\cite{Aminof2017}), if the sequence of labels on the loop belongs to the objective, i.e., $c_j c_{j + 1} \dots c_{\ell_{\play} - 1} \in \Obj$;

    \item in the \emph{lasso semantics}, if the sequence of labels of the entire play belongs to the objective, i.e., $c_0 c_1 \dots c_{\ell_{\play} - 1} \in \Obj$.
\end{itemize}

We note that all games considered in this paper are determined. Finite perfect-information games are determined by a classical result of Zermelo~\cite{Zermelo1913}, and the infinite games we consider have also been shown to be determined.

We consider the following computational problem for qualitative games. We assume that the winning condition is given by a finite representation.

\begin{question}[Solving a game]
    Given a game $\GNoPay$, decide whether Eve has a winning strategy from $v_0$.
\end{question}

A \emph{quantitative game} is a tuple $\G = (\A, \Pay)$, where $\A$ is an arena and $\Pay : \Colors^* \cup \Colors^{\omega} \to \mathbb{R} \cup \{\pm \infty\}$ is a payoff function. Eve seeks to maximise the payoff, while Adam seeks to minimise it. We define the loop and lasso semantics for quantitative games. Given a payoff function $\Pay$ and a first-cycle play $\play$, its value is defined as follows:
\begin{itemize}
    \item in the \emph{loop semantics} (as studied by Aminof and Rubin~\cite{Aminof2017}), as the value of the sequence of labels on the loop, i.e.,
          $\Pay(c_j c_{j + 1} \dots c_{\ell_{\play} - 1})$;

    \item in the \emph{lasso semantics}, as the value of the sequence of labels of the entire play, i.e., \\
          $\Pay(c_0 c_1 \dots c_{\ell_{\play} - 1})$.
\end{itemize}
Once the semantics is fixed, we use $\Pay(\play)$ for the corresponding value of the play $\play$.

Given strategies $\strE$ for Eve and $\strA$ for Adam, let $\play_{\strE, \strA}(v_0)$ be the unique first-cycle play starting at $v_0$ and consistent with these strategies. The values of the game for Eve and Adam at vertex $v_0$, denoted $\val_{\G}^{\Eve}(v_0)$ and $\val_{\G}^{\Adam}(v_0)$, respectively, are defined as:
\[
    \val_{\G}^{\Eve}(v_0) := \max_{\strE} \min_{\strA} \Pay(\play_{\strE, \strA}(v_0)),
    \qquad
    \val_{\G}^{\Adam}(v_0) := \min_{\strA} \max_{\strE} \Pay(\play_{\strE, \strA}(v_0)).
\]
A game is \emph{determined} if $\val_{\G}^{\Eve}(v) = \val_{\G}^{\Adam}(v)$ for all vertices $v$. In this case, we call this unique value the \emph{value of the game} from $v$, denoted $\val_{\G}(v)$. We will drop the subscript $\G$ when the game is clear from context. A strategy $\strE$ for Eve is \emph{optimal} from $v$ if it guarantees the value of the game, i.e., $\min_{\strA} \Pay(\play_{\strE, \strA}(v)) = \val_\G(v)$; optimal strategies for Adam are defined analogously. A quantitative game is \emph{positionally determined} if both players have optimal positional strategies from every vertex.

Unless stated otherwise, we assume that weights and target values in quantitative games are encoded in binary. Consequently, the size of the game representation is polynomial in the number of vertices, edges, and the maximum bit-length of the labels. We consider the following computational problems for quantitative games.

\begin{question}[Threshold problem]
    Given a game $\GPay$ and a threshold $x \in \mathbb{Q} \cup \{\pm \infty\}$, decide whether $\val_{\G}(v_0) \geq x$.
\end{question}

\begin{question}[Computing the value]
    Given a game $\GPay$, compute $\val_{\G}(v_0)$.
\end{question}

Similar to Theorem~1 in~\cite{Aminof2017}, lasso games with an effectively computable objective or payoff function can be solved in polynomial space. For completeness, we provide the adapted proof below.

\begin{proposition}\label{prop:lasso-complexity}
    If the membership problem for objective $\Obj$ -- given a finite word $w \in \Colors^*$, decide whether $w \in \Obj$ -- is solvable in $\PSPACE$, then solving lasso games with objective $\Obj$ is in $\PSPACE$.

    If the payoff function $\Pay$ is computable in polynomial space and the size of its output is polynomially bounded, then the value of the lasso game with the payoff function $\Pay$ is computable in polynomial space.
\end{proposition}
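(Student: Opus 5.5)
We plan to evaluate the finite game tree of the lasso game by a depth-first minimax search. The key observation is that every first-cycle play has length at most $n = |V|$. Indeed, among $v_0, \dots, v_n$ some vertex must repeat, so $\ell_\play \leq n$. Hence the game tree rooted at $v_0$ has depth at most $n$. Its nodes are the prefixes $v_0, c_0, v_1, \dots, v_i$ of first-cycle plays with pairwise distinct vertices $v_0, \dots, v_i$, and its leaves are the maximal (first-cycle) plays. The tree may have exponentially many nodes, but we never store it. We traverse it recursively, keeping only the current path together with a constant amount of information per level. Each path has length at most $n$ and each entry is an edge of the arena, so the stack uses polynomial space, provided each leaf evaluation is also done in polynomial space.

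For the qualitative statement, we define recursively a procedure $\mathrm{Win}(\pi)$ on a prefix $\pi$ ending in $v$. If the last vertex of $\pi$ already occurs earlier in $\pi$, then $\pi$ is a maximal play. In that case we output the answer of the assumed $\PSPACE$ membership procedure on the label sequence $c_0 \cdots c_{\ell-1}$ of $\pi$, which has length at most $n$. Otherwise, if $v \in \Ve$, we return the disjunction of $\mathrm{Win}(\pi \cdot (v,c,u))$ over $(v,c,u) \in E(v)$, and if $v \in \Va$, the conjunction. Each recursive call reuses the same workspace. The membership test is run on an input of polynomial size, so it uses polynomial space that is released after the call, and the recursion depth is at most $n+1$. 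By a straightforward induction on the height, following Zermelo's backward induction, $\mathrm{Win}(\pi)$ is true exactly when Eve has a winning strategy in the subgame after $\pi$. In particular $\mathrm{Win}(v_0)$ solves the game.

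For the quantitative statement, the same recursion computes $\mathrm{Val}(\pi)$. At a leaf, $\mathrm{Val}(\pi) = \Pay(c_0\cdots c_{\ell-1})$, computed in polynomial space. At an internal node owned by Eve it is the maximum over successors, and at a node owned by Adam the minimum. Along the recursion we store only the running max or min and the index of the current successor at each level. Every stored value is a leaf value, so by the assumption on output size it has polynomially many bits. Hence each level uses polynomial space, and the total space is polynomial. Backward induction again shows that $\mathrm{Val}(v_0) = \val^{\Eve}(v_0) = \val^{\Adam}(v_0)$; this also yields determinacy, and optimal strategies are read off the argmax and argmin choices.

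The main obstacle is only the bookkeeping: checking that the space bounds compose correctly. The PSPACE oracle computations must be run sequentially with workspace reuse rather than stacked. Numbers must stay polynomially bounded, which holds because max and min return existing leaf values rather than creating new ones.
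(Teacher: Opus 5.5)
Your proposal is correct and follows essentially the paper's argument: a depth-first minimax evaluation of the game tree, whose depth is bounded by the number of vertices. Like the paper, you store only the current path and one intermediate value per level, and you evaluate leaves with the assumed polynomial-space procedure. The only cosmetic difference is that the paper handles the qualitative case by encoding $\Obj$ as a Boolean payoff, instead of writing a separate conjunction/disjunction recursion.
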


\begin{proof}
    It suffices to consider quantitative games, since a qualitative objective $\Obj$ can be represented by the Boolean payoff $\Pay_{\Obj}(w) = 1$ if $w \in \Obj$, and $\Pay_{\Obj}(w) = 0$ otherwise.

    Let $n = |V|$. We unfold the arena into a finite game tree and evaluate it by depth-first search. Since a play ends when a vertex is repeated, the tree has depth at most $n + 1$. At Eve's vertices, the algorithm takes the maximum value over the outgoing edges; at Adam's vertices, it takes the minimum. At each leaf, it computes the payoff of the resulting first-cycle play.

    The algorithm stores only the current path and one intermediate value at each depth. Since $\Pay$ is computable in polynomial space and its values have polynomial-size representations, the computation uses polynomial space. Hence, the value of the game can be computed in polynomial space.
\end{proof}

\section{Mean-Payoff}\label{sec:MP}
We consider finite variants of \emph{mean-payoff games}, where the value is the average of the weights traversed before the first cycle closes.

\begin{definition}[Mean-payoff game]
    Let the set of labels (weights) be a finite subset of integers, $\Colors \subseteq \mathbb{Z}$. The mean-payoff function assigns to every non-empty finite label sequence its arithmetic mean:
    \[
        \MP(c_0 \dots c_{\ell - 1}) = \frac{1}{\ell} \sum_{i = 0}^{\ell - 1} c_i.
    \]
\end{definition}

Classical infinite-duration mean-payoff games were first studied by Ehrenfeucht and Mycielski~\cite{Ehrenfeucht1979Positional}, who proved their positional determinacy. The first-cycle variant in loop semantics is equivalent to the classical game and therefore, the threshold problem has the same complexity $\NP \cap \cocompl\NP$~\cite{Zwick1996}. Shifting to lasso semantics makes the problem $\PSPACE$-hard. For the threshold $0$, we can use the \emph{sum} of the weights instead of their average to determine whether the payoff is non-negative.

\begin{theorem}\label{thm:lasso-MP-complexity}
    The threshold problem with fixed threshold $0$ is $\PSPACE$-complete for lasso mean-payoff games. The hardness already holds on arenas without negative-weight cycles.
\end{theorem}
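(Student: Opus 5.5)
Membership in $\PSPACE$ comes directly from \cref{prop:lasso-complexity}. The mean-payoff of a finite word is a rational number whose numerator and denominator have polynomially many bits, since $\ell \leq n$ and each weight is encoded in binary. It can therefore be computed in polynomial time. Alternatively, we can use the qualitative objective $\{w \mid \sum_i w_i \geq 0\}$: for threshold $0$, the sign of the mean equals the sign of the sum, and membership in this objective is decidable in polynomial time.

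For hardness, I would reduce from $\QBF$ in the form $\QBFcnf$: $\exists x_1 \forall x_2 \cdots Q x_m\, \varphi$, where $\varphi$ is in 3CNF with clauses $C_1, \dots, C_k$. The arena has two phases.

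\emph{Assignment phase.} This is a chain of gadgets, one per variable $x_i$, owned by Eve for existential and by Adam for universal variables. From the gadget vertex $g_i$ the owner moves either to a vertex $t_i$ or to $f_i$, and both continue to $g_{i+1}$. Because the play is a simple path until the first repetition, the chosen branch records the assignment. All edges in this phase get weight $0$.

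\emph{Verification phase.} After $g_{m+1}$, Adam picks a clause $C_j$ (an Adam vertex with edges to clause vertices $c_j$). Eve then picks a literal $\ell \in C_j$. From there the play moves to the vertex $t_i$ if $\ell = x_i$, or to $f_i$ if $\ell = \neg x_i$. If that vertex was visited, the first cycle closes. If it was not visited, the play continues to a sink-like part that closes a cycle through a strongly negative edge.

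The weights must make the sign of the total sum depend only on whether Eve's literal vertex was already visited. I plan to put weight $+1$ on the edge from the clause/literal vertex back into $t_i$ or $f_i$. After re-entering, the play is forced to follow the next edge to $g_{i+1}$, which has already been visited, so the lasso closes immediately.

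In the unvisited case the play must also end, and with negative total. I would let $t_i$ and $f_i$ have their only outgoing edge to $g_{i+1}$. Then entering an unvisited $t_i$ leads to the already-visited $g_{i+1}$ and closes the lasso there as well. To separate the two cases I need another device. I would split each of $t_i$ and $f_i$ into an entry vertex $t_i^{\mathrm{in}}$ and the main vertex $t_i$. The literal edge goes directly to $t_i$ with weight $+1$ if the vertex was visited. If it was not visited, the only continuation from $t_i$ is $t_i \to g_{i+1}$, which should carry weight $-2$. The assignment edge into $t_i$ carries $+2$ to compensate, so on the normal pass the net contribution is $0$.

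With these weights, the revisit case ends at $t_i$ with total $+1$. The unvisited case ends at $g_{i+1}$ with total $1 - 2 = -1$. All cycles in the arena only go through $\cdots \to t_i \to g_{i+1} \to \cdots$ in a forward direction, or back via the $+1$ edge. The only back edges are into $t_i$ or $f_i$ with weight $+1$. A cycle containing a back edge therefore traverses $+1$, then $-2$ on the edge $t_i \to g_{i+1}$, then forward zero edges back to the clause vertex. That sum is negative, which violates the hardness restriction.

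\textbf{Main obstacle.} The hardest part will be adjusting the weights so that every cycle is non-negative while the unvisited case still yields a negative total lasso sum. My plan is to exploit the head: only the lasso matters, not the cycle. The negative weight should sit on the \emph{head portion} rather than inside any cycle. For example, I would put $-1$ on Adam's clause-choice edge, on the head, and $+1$ on the back edge into $t_i$ or $f_i$. If the literal vertex was visited, the play closes with total $0$, which is non-negative, so Eve wins. If it was not visited, the play proceeds through $t_i \to g_{i+1}$ with weight $0$ and closes with total $-1$. The clause-choice edge lies on no cycle, since the verification phase is a DAG except for the back edges, so all cycles have weight $\geq 0$. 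Correctness then follows: Eve wins exactly when every clause Adam can name contains a literal made true by the assignment, which is equivalent to the truth of the QBF.
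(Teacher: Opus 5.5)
Your membership argument is fine, and the skeleton of your reduction matches the paper's: an assignment chain, then Adam picks a clause, then Eve picks a literal whose edge points back to $t_i$ or $f_i$. The gap is in the weights of your final construction. You put $-1$ on Adam's clause-choice edge, $+1$ on the edge into $t_i$ or $f_i$, and $0$ everywhere else. The $+1$ edge is then traversed in \emph{both} cases. If the literal vertex was visited, the lasso closes with sum $-1+1=0$. If it was not visited, the play continues along $t_i \to g_{i+1}$ (weight $0$) and closes with sum $-1+1+0=0$ as well, not $-1$. So every play ends with sum $0$ whatever the assignment, and the reduction does not separate true from false formulas. Your justification for the absence of negative cycles is also wrong: the clause-choice edge does lie on cycles. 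One goes from $t_i$ forward through $g_{i+1}, \dots, g_{m+1}$, takes the clause-choice edge to a clause containing $x_i$, and returns to $t_i$. In this weighting these cycles merely happen to sum to $0$.

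What is missing is a negative edge traversed only in the unvisited case, plus a compensation that no cycle sees. Your intermediate version ($+2$ into $t_i$, $-2$ on $t_i \to g_{i+1}$, $+1$ on the back edge) already separates the cases; its only defect is the cycle of weight $+1-2=-1$. The fix is the principle you stated, applied to an edge that genuinely lies on no cycle, such as an edge leaving the initial vertex, which has no incoming edges. The paper's weights are:
\begin{itemize}
\item $+1$ on $v_i \to t_i, f_i$ and $-1$ on $t_i, f_i \to v_{i+1}$;
\item $+1$ on the edge entering the clause-selection vertex $c$;
\item $-1$ on a fresh initial edge $v_0 \to v_1$;
\item $0$ on all verification edges.
\end{itemize}
The head sum on reaching $c$ is then $0$. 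A true literal closes the lasso with sum $0$. A false literal leads to an unvisited $t_i$ or $f_i$, whose only exit costs $-1$ and reaches the already visited $v_{i+1}$, giving sum $-1$. Every cycle uses exactly one back edge, the $-1$ exit from $t_i$ or $f_i$, some whole gadgets of net weight $0$, and the $+1$ edge into $c$, so every cycle has weight $0$.
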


\begin{proof}
    By~\Cref{prop:lasso-complexity}, we can compute the value of a lasso mean-payoff game from a given vertex and thus solve even more general problems in polynomial space.

    To prove hardness, we show a reduction from $\QBFcnf$ problem. The reduction is similar to the one proposed by Aminof and Rubin~\cite[Theorem~1]{Aminof2017} and to the well-known reduction for generalised geography (cf., e.g., \cite[Chapter~19]{papa1994}). Given a quantified Boolean formula
    \[
        \Phi = Q_1 x_1 \, Q_2 x_2 \, \dots \, Q_n x_n \, \phi,
    \]
    where $\phi$ is a propositional formula in 3-CNF, we construct a lasso mean-payoff game $\G_\Phi$ such that Eve can achieve a value of at least $0$ from $v_0$ if and only if the quantified Boolean formula $\Phi$ is true.

    Intuitively, the game proceeds in two phases: a \emph{variable assignment phase}, where players assign values to the variables $x_1, \dots, x_n$, followed by a \emph{verification phase}, where Adam attempts to show that the resulting valuation falsifies the formula. If Adam chooses a clause that is not satisfied by the assignment, all of its literals evaluate to false. No matter which literal Eve chooses, the game does not end immediately. Instead, an additional edge of weight $-1$ is traversed. If the clause is satisfied, Eve can choose a satisfying literal, and the corresponding edge ends the play immediately.

    The variable assignment phase consists of $n$ connected \emph{variable gadgets}, each corresponding to a variable in the formula $\Phi$. The $i$-th gadget begins at a vertex $v_i$ controlled by the player associated with the quantifier $Q_i$. From $v_i$, that player assigns $\True$ or $\False$ to $x_i$ by moving to $t_i$ or $f_i$, respectively, along an edge of weight $1$. Both vertices then connect to the next gadget via an edge of weight $-1$. Thus, every assignment contributes a total weight of $0$. We also give the initial edge $v_0 \to v_1$ weight $-1$ and the edge entering vertex $c$ weight $1$. Consequently, the underlying graph of $\G_\Phi$ has no negative cycles, and the accumulated sum upon reaching the verification phase at $c$ is $0$.

    The verification phase begins at vertex $c$, controlled by Adam. From $c$, Adam selects a clause $C_j$ from the 3-CNF formula $\phi$. Each clause is represented by a vertex with three outgoing edges leading to vertices corresponding to its literals $l_{j_1}, l_{j_2}, l_{j_3}$. This clause vertex is controlled by Eve. If a literal $l_{j_k}$ corresponds to the positive variable $x_i$, then it has an edge leading to $t_i$; if it corresponds to the negated variable $\neg x_i$, then it has an edge leading to $f_i$. An example of this construction is illustrated in~\Cref{fig:lasso-MP-hardness}.

    \begin{figure}[hbt!]
        \centering
        \begin{tikzpicture}[
        circ/.style={circle, draw, minimum size=7mm, inner sep=0pt},
        sqr/.style={rectangle, draw, minimum size=7mm, inner sep=0pt},
        arr/.style={->, >=stealth},
        weight/.style={font=\scriptsize, inner sep=1.2pt, text=black}
    ]
    \def\chainSpread{1.20}
    \def\cPosition{(6.30, 4.25)}
    \def\clauseSpread{3.0}
    \def\literalSpread{0.90}

    \node[circ] (v0) at (0, 0) {$v_0$};
    \node[circ] (v1) at ({1.25 * \chainSpread}, 0) {$v_1$};

    \draw[arr] (v0) -- node[weight, above] {$-1$} (v1);

    \node[circ] (t1) at ({2.25 * \chainSpread}, 1) {$t_1$};
    \node[circ] (f1) at ({2.25 * \chainSpread}, -1) {$f_1$};
    \node[sqr] (v2) at ({3.25 * \chainSpread}, 0) {$v_2$};

    \draw[arr] (v1) -- node[weight, below right] {$1$} (t1);
    \draw[arr] (v1) -- node[weight, above right] {$1$} (f1);
    \draw[arr] (t1) -- node[weight, below left] {$-1$} (v2);
    \draw[arr] (f1) -- node[weight, above left] {$-1$} (v2);

    \node[anchor=north, yshift=-3mm] at (v1.south) {$x_1$};

    \node[circ] (t2) at ({4.25 * \chainSpread}, 1) {$t_2$};
    \node[circ] (f2) at ({4.25 * \chainSpread}, -1) {$f_2$};
    \node[circ] (v3) at ({5.25 * \chainSpread}, 0) {$v_3$};

    \draw[arr] (v2) -- node[weight, below right] {$1$} (t2);
    \draw[arr] (v2) -- node[weight, above right] {$1$} (f2);
    \draw[arr] (t2) -- node[weight, below left] {$-1$} (v3);
    \draw[arr] (f2) -- node[weight, above left] {$-1$} (v3);

    \node[anchor=north, yshift=-3mm] at (v2.south) {$x_2$};

    \node[circ] (t3) at ({6.25 * \chainSpread}, 1) {$t_3$};
    \node[circ] (f3) at ({6.25 * \chainSpread}, -1) {$f_3$};
    \node[sqr] (v4) at ({7.25 * \chainSpread}, 0) {$v_4$};

    \draw[arr] (v3) -- node[weight, below right] {$1$} (t3);
    \draw[arr] (v3) -- node[weight, above right] {$1$} (f3);
    \draw[arr] (t3) -- node[weight, below left] {$-1$} (v4);
    \draw[arr] (f3) -- node[weight, above left] {$-1$} (v4);

    \node[anchor=north, yshift=-3mm] at (v3.south) {$x_3$};

    \node[circ] (vn) at ({8.75 * \chainSpread}, 0) {$v_n$};

    \draw[arr] (v4) -- node[midway, fill=white, inner sep=3pt] {$\cdots$} (vn.west);

    \node[circ] (tn) at ({9.75 * \chainSpread}, 1) {$t_n$};
    \node[circ] (fn) at ({9.75 * \chainSpread}, -1) {$f_n$};
    \node[circ] (chainEnd) at ({10.75 * \chainSpread}, 0) {};

    \draw[arr] (vn) -- node[weight, below right] {$1$} (tn);
    \draw[arr] (vn) -- node[weight, above right] {$1$} (fn);
    \draw[arr] (tn) -- node[weight, below left] {$-1$} (chainEnd);
    \draw[arr] (fn) -- node[weight, above left] {$-1$} (chainEnd);

    \node[anchor=north, yshift=-3mm] at (vn.south) {$x_n$};

    \node[sqr] (C) at \cPosition {$c$};
    \node[circ] (C1) at ([xshift=-\clauseSpread cm, yshift=-1.20cm]C) {$C_1$};

    \node[circ] (litx1) at ([xshift=-\literalSpread cm, yshift=-1cm]C1) {$x_1$};
    \node[circ] (litx2) at ([yshift=-1cm]C1) {$x_2$};
    \node[circ] (litnx3) at ([xshift=\literalSpread cm, yshift=-1cm]C1) {$\bar{x}_3$};

    \draw[arr] (C) -- (C1);
    \draw[arr] (C1) -- (litx1);
    \draw[arr] (C1) -- (litx2);
    \draw[arr] (C1) -- (litnx3);

    \node[circ] (C2) at ([yshift=-1.20cm]C) {$C_2$};
    \node[circ] (litnx1) at ([xshift=-\literalSpread cm, yshift=-1cm]C2) {$\bar{x}_1$};
    \node[circ] (litnx2) at ([yshift=-1cm]C2) {$\bar{x}_2$};
    \node[circ] (litc23) at ([xshift=\literalSpread cm, yshift=-1cm]C2) {};

    \draw[arr] (C) -- (C2);
    \draw[arr] (C2) -- (litnx1);
    \draw[arr] (C2) -- (litnx2);
    \draw[arr] (C2) -- (litc23);

    \path (C2.center) -- ([xshift=\clauseSpread cm]C2.center) node[midway] {$\cdots$};

    \node[circ] (Cm) at ([xshift=\clauseSpread cm, yshift=-1.20cm]C) {$C_m$};
    \node[circ] (litcm1) at ([xshift=-\literalSpread cm, yshift=-1cm]Cm) {};
    \node[circ] (litcm2) at ([yshift=-1cm]Cm) {};
    \node[circ] (litcm3) at ([xshift=\literalSpread cm, yshift=-1cm]Cm) {};

    \draw[arr] (C) -- (Cm);
    \draw[arr] (Cm) -- (litcm1);
    \draw[arr] (Cm) -- (litcm2);
    \draw[arr] (Cm) -- (litcm3);

    \draw[arr] (litx1) -- (t1);
    \draw[arr] (litx2) -- (t2);
    \draw[arr] (litnx3) -- (f3);

    \draw[arr] (chainEnd) to[out=90, in=0] node[weight, midway, above] {$1$} (C);
\end{tikzpicture}
        \caption{The game $\G_\Phi$. Square vertices belong to Adam and circular vertices to Eve. Unlabelled edges have weight $0$.}
        \label{fig:lasso-MP-hardness}
    \end{figure}

    We establish the correctness of this construction with the following claim.

    \begin{claim}
        For any quantified Boolean formula $\Phi$, the value of the lasso mean-payoff game $\G_\Phi$ from $v_0$ is at least $0$ if and only if $\Phi$ evaluates to true.
    \end{claim}

    \begin{claimproof}
        $\Rightarrow$ We prove the contrapositive. Assume that $\Phi$ is false. Adam has a strategy for assigning the universally quantified variables so that, for any choices made by Eve for the existential variables, the resulting assignment falsifies $\phi$. He can therefore ensure that the assignment formed through the variable gadgets leaves at least one clause $C_j$ unsatisfied. When the play reaches $c$, Adam selects such a clause. All three of its literals evaluate to $\False$, so Eve must select an unsatisfied literal. The corresponding edge leads to a vertex that was \emph{not visited} during the variable assignment phase. The play then traverses one more edge of weight $-1$, and its total sum is $-1$.

        $\Leftarrow$ Assume that $\Phi = Q_1 x_1 \, Q_2 x_2 \, \dots \, Q_n x_n \, \phi$ is true. Eve can choose the values of the existentially quantified variables so that, regardless of Adam's choices for the universally quantified variables, the resulting assignment satisfies $\phi$. In $\G_\Phi$, Eve therefore selects the $\True$ or $\False$ branch in each gadget she controls according to this strategy. After the assignment phase, Adam selects an arbitrary clause $C_j$ at $c$. Because the assignment satisfies $\phi$, the chosen clause has a literal that evaluates to $\True$. Eve selects such a literal at $C_j$. Its edge leads to a vertex \emph{already visited} during the assignment phase and thus closes a cycle. The total sum of the play is $0$.
    \end{claimproof}

    This concludes the hardness proof.
\end{proof}

\subsection{One-player games}

A game is an \emph{Eve game} if every vertex belongs to Eve, and an \emph{Adam game} if every vertex belongs to Adam.

\begin{proposition}\label{prop:one-player-eve-lasso-MP-complexity}
    The threshold problem with fixed threshold $0$ is $\NP$-complete for lasso mean-payoff Eve games. The hardness already holds on arenas without negative-weight cycles.
\end{proposition}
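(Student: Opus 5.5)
Membership in $\NP$ is direct. In an Eve game, Eve wins with threshold $0$ iff there exists a first-cycle play from $v_0$ whose total weight is non-negative. A first-cycle play has at most $n+1$ vertices, so it is a polynomial-size certificate. We guess the lasso, check that it is a valid path in which only the last vertex repeats an earlier one, and compute its weight sum in binary. The sign of the mean equals the sign of the sum, since the length is positive. In a one-player game the value is the maximum over all such plays, so this characterisation is exact.

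For hardness I would reduce from \HamiltonP{} (directed, or with a fixed start vertex $s$), using the negative-cycle-free requirement as the main constraint. The idea is to give every original edge weight $+1$. After the path, we add a final step that closes a cycle and carries a large negative weight $-(n-1)$. The total is then non-negative only if the head visited all $n$ vertices.

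Concretely, take $G=(V,E)$ with $|V|=n$ and start vertex $s$. Add a fresh initial vertex $v_0$ with an edge $v_0\to s$ of weight $0$. Keep each edge of $G$ with weight $1$. For each $u\in V$ add a fresh vertex $u'$ with edges $u\to u'$ of weight $-(n-1)$ and $u'\to s$ of weight $0$. The edge $u'\to s$ closes the lasso, because $s$ was already visited.

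The point to check is which plays can end where. A play that closes a cycle inside $G$ before using any $u\to u'$ edge has only positive weights, so its sum is positive. That would break the reduction. I would fix this by also splitting every original vertex $x$ into $x_{\mathrm{in}}\to x_{\mathrm{out}}$, with the original edges going from $y_{\mathrm{out}}$ to $x_{\mathrm{in}}$ with weight $+1$. A revisit then closes at some $x_{\mathrm{in}}$ after an edge of weight $+1$, so it still gives a positive sum. Better is to reverse the signs. Original edges get weight $-1$, and each exit edge $u\to u'$ gets weight $+(n-1)$. An internal repeat then gives a negative sum, and the exit sum $-(k-1)+(n-1)$ is non-negative iff $k\geq n$. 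However, this creates negative cycles inside $G$, which violates the arena requirement.

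To avoid negative cycles, I would route each return to $s$ through a compensating edge. Every edge $x\to y$ of $G$ becomes $x\to e_{xy}\to y$ with weights $-1$ then $+1$, so every cycle has weight $0$. A play that revisits a vertex early has sum $0$ or $-1$. Taking the repeat at $y$ gives sum $0$, which is bad, so I would instead make the weights $+1$ then $-1$ and insert a start edge $v_0\to s$ of weight $-1$, mirroring \cref{thm:lasso-MP-complexity}. Tuning these weights so that early closures are strictly negative, while the exit gadget $u\to u'\to s$ yields exactly $0$ precisely when $n$ distinct vertices were visited, is the main obstacle. I would try weights that make each visited vertex contribute a prefix-sum pattern in which only a full Hamiltonian head compensates a final penalty of $-(n-1)$, with all cycles of weight $\geq 0$.
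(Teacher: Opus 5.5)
Your membership argument is correct and matches the paper's. The hardness part has a genuine gap: you never reach a construction that works, so the proposition is not proved.

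\begin{itemize}
\item Your first gadget fails for the reason you give: a cycle closed inside $G$ before the exit has positive sum.
\item The sign-reversed variant fails even before the negative-cycle issue. With $k$ visited vertices, the exit sum is $-(k-1)+(n-1)=n-k$. This is non-negative iff $k\le n$, which always holds. For example, $v_0\to s\to s'\to s$ has sum $n-1\ge 0$.
\item The subdivided version with weights $+1,-1$ and start weight $-1$ makes every prefix sum at an original vertex equal to $-1$, however many vertices were visited. No exit gadget attached to it can recognise a Hamiltonian head: the zero-weight cycles erase exactly the count you need.
\item Your last sentence states what a construction should achieve rather than giving one.
\end{itemize}

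The missing idea is to charge the penalty on the \emph{first} edge instead of on an exit edge. Every play must cross the edge leaving $v_0$, whereas not every play uses your exit gadget. The paper's construction is as follows:
\begin{itemize}
\item edges from a fresh $v_0$ to every vertex of $H$ get weight $-n$;
\item every edge of $H$ gets weight $+1$;
\item each vertex gets a self-loop of weight $+1$, so there are no terminal vertices.
\end{itemize}
After leaving $v_0$, a first-cycle play $v_0,u_1,\dots,u_m,u_j$ uses exactly $m$ edges of weight $+1$. Here $m$ is the number of distinct vertices of $H$ visited, so the sum is $m-n$. This is non-negative iff $m=n$, i.e.\ iff $u_1,\dots,u_n$ is a Hamiltonian path. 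Early closures are therefore automatically negative. Since $v_0$ has no incoming edges, every cycle has positive weight, so the no-negative-cycle condition comes for free. In your fixed-start setting, a single edge $v_0\to s$ of weight $-n$, weight $+1$ on all edges of $G$, and $+1$ self-loops would work just as well.
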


\begin{proof}
    Membership follows by guessing a lasso and checking its sum.

    For hardness, let $H = (V_H, E_H)$ be an instance of $\HamiltonP$ with $n = |V_H|$. We construct a lasso mean-payoff Eve game $\G_H$ as follows. Add a fresh initial vertex $v_0$, an edge of weight $-n$ from $v_0$ to every vertex of $H$, and a self-loop of weight $1$ at every vertex of $H$. Give every edge of $H$ weight $1$, and let every vertex belong to Eve. If $H$ has a Hamiltonian path, Eve follows it and then takes a self-loop, obtaining a total sum $-n + n = 0$. Conversely, a non-negative lasso must traverse at least $n$ edges of weight $1$ after leaving $v_0$. Since the play ends at the first repetition, it must visit all $n$ vertices of $H$, yielding a Hamiltonian path. The arena has no negative cycles.
\end{proof}

\begin{proposition}\label{prop:one-player-adam-lasso-MP-complexity}
    The threshold problem with fixed threshold $0$ is $\cocompl\NP$-complete for lasso mean-payoff Adam games.
\end{proposition}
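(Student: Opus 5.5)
In an Adam game every vertex belongs to Adam, so the value from $v_0$ is the minimum of $\MP$ over all first-cycle plays from $v_0$. Hence the value is at least $0$ if and only if \emph{every} lasso from $v_0$ has non-negative sum. Equivalently, the complement asks whether some lasso from $v_0$ has negative sum. I will show this complement problem is $\NP$-complete, which gives $\cocompl\NP$-completeness.

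\textbf{Membership.} A witness for the complement is a single lasso: a simple path $v_0,\dots,v_\ell$ with $v_\ell=v_j$ for some $j<\ell$ and all earlier vertices distinct, together with the chosen edges. Its length is at most $n$, so it is polynomial in size. Checking that it is a first-cycle play and that its weight sum is negative takes polynomial time, since the weights are in binary. So the complement is in $\NP$ and the threshold problem is in $\cocompl\NP$.

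\textbf{Hardness.} I reduce $\HamiltonP$ to the complement, adapting the construction of \Cref{prop:one-player-eve-lasso-MP-complexity} with the signs flipped. Given $H=(V_H,E_H)$ with $n=|V_H|$:
\begin{itemize}
\item add a fresh initial vertex $v_0$ with an edge of weight $n$ to every vertex of $H$;
\item give each edge of $H$ weight $-1$;
\item put a self-loop of weight $-1$ at each vertex of $H$;
\item let every vertex belong to Adam.
\end{itemize}
The self-loops ensure there are no terminal vertices.

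Every lasso has the form: the edge from $v_0$, then a simple path of $k\ge 0$ edges of $H$, then one closing edge, which is either a self-loop or an edge of $H$ back to an earlier vertex. Each edge after the first has weight $-1$, so the total sum is $n-(k+1)$.

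If $H$ has a Hamiltonian path, Adam follows it ($k=n-1$) and closes with a self-loop, giving sum $0$. That is not negative, so I need a bias. I would add one more unit of negativity, for example by giving the edges from $v_0$ weight $n-1$ instead. Then the sum is $n-1-(k+1)=n-k-2$. A Hamiltonian path gives sum $-1<0$. Conversely, a negative sum requires $k\ge n-1$, and since the path in $H$ is simple, $k\le n-1$. So $k=n-1$ and the path visits all $n$ vertices, i.e.\ it is a Hamiltonian path. This shows that some lasso is negative iff $H$ is Hamiltonian, so the value is at least $0$ iff $H$ has no Hamiltonian path, and the problem is $\cocompl\NP$-hard.

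The main care point is exactly this off-by-one calibration of the weight on the initial edges, and checking that the closing edge always contributes exactly $-1$ whether it is a self-loop or an edge of $H$. Unlike the Eve case, this arena has negative cycles; that is unavoidable, since the paper notes that Adam games without negative cycles are in $\PTIME$.
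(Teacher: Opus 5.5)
Your proposal is correct and follows the paper's proof: the complement is in $\NP$ with a negative-sum lasso as certificate, and hardness comes from \HamiltonP{} using the same arena. That arena has weight $n-1$ on the edges leaving $v_0$ and $-1$ on every other edge, so a negative sum forces the head to visit all $n$ vertices of $H$. Your explicit count of the sum as $n-k-2$ spells out the argument that the paper states only briefly.
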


\begin{proof}
    The complement is in $\NP$, since a negative-sum lasso is a polynomial-size certificate.

    For hardness, we use the same underlying graph as above, let every vertex belong to Adam, assign weight $n - 1$ to each edge leaving $v_0$, and assign weight $-1$ to every other edge. Let $\G_H$ denote the resulting game. By the same counting argument, Adam can produce a negative-sum lasso if and only if $H$ has a Hamiltonian path: such a path followed by a self-loop has sum $(n - 1) - n = -1$, while every negative-sum lasso must traverse all $n$ vertices before its first repetition. Thus, $\val_{\G_H}(v_0) \geq 0$ if and only if $H$ has no Hamiltonian path.
\end{proof}

Restricting the arenas to those without negative cycles does not simplify lasso mean-payoff Eve games. For Adam games, however, the problem becomes polynomial-time solvable.

\begin{lemma}\label{lem:minimal-lasso-ptime}
    Let $G = (V, E)$ be a weighted directed multigraph with no negative cycles, and let $v_0 \in V$. A minimum-sum lasso starting at $v_0$ can be computed in polynomial time.
\end{lemma}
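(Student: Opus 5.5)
The plan is to reduce the search for a minimum-sum lasso to shortest-path computations, which are well defined and polynomial (Bellman--Ford) because $G$ has no negative cycles. Write $d(u,v)$ for the shortest-walk distance from $u$ to $v$; with no negative cycles it is attained by a simple path. For a vertex $x$ let $\gamma(x) = \min_{(y,c,x) \in E} \bigl(d(x,y) + c\bigr)$, the minimum weight of a closed walk through $x$. It is attained by a simple cycle through $x$, since any shortest $x$--$y$ walk can be taken simple. The algorithm computes
\[
  M = \min_{x \text{ reachable from } v_0} \bigl( d(v_0,x) + \gamma(x) \bigr),
\]
and then extracts a concrete lasso of weight $M$ from the minimising $x$, a shortest simple path $P$ from $v_0$ to $x$ and a minimum simple cycle $C$ through $x$. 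Everything here is $|V|$ Bellman--Ford runs plus a minimisation, hence polynomial.

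The inequality $M \le$ (minimum lasso sum) is immediate. A lasso with head ending at $v_j$ and loop through $v_j$ is one particular walk to $x = v_j$ followed by one particular closed walk through $x$, so its weight is at least $d(v_0,x) + \gamma(x)$.

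The real content is the converse: from the walk $W = P \cdot C$ one must extract an actual lasso of weight at most $w(W)$. This is the main obstacle, because $P$ and $C$ may intersect. Let $z$ be the first vertex of $C$ (walking from $x$) that lies on $P$, so $z \neq x$ unless $C$ avoids $P$ entirely. If $C$ avoids $P$ except at $x$, then $P \cdot C$ is already a lasso and we are done. Otherwise split $P = P_1 \cdot P_2$ at $z$ and $C = C_1 \cdot C_2$ at $z$, where $C_1$ runs from $x$ to $z$ and $C_2$ from $z$ back to $x$. Then $P \cdot C_1$ visits only distinct vertices until it returns to $z$, so it is a lasso $L$ with head $P_1$ and loop $P_2 \cdot C_1$.

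Its weight satisfies
\[
  w(L) = w(P) + w(C_1) = w(W) - w(C_2).
\]
So it remains to show $w(C_2) \ge 0$, or else to replace the pair $(x,C)$ by a better one. For this I would use optimality of the choices. The walk $P_1 \cdot C_2$ goes from $v_0$ to $x$, so $w(P_1) + w(C_2) \ge d(v_0,x) = w(P_1) + w(P_2)$, which gives $w(C_2) \ge w(P_2)$. Also $P_2 \cdot C_1$ is a closed walk through $z$, hence has non-negative weight. Combining these with the minimality of $M$ at $z$, via $d(v_0,z) + \gamma(z) \le w(P_1) + w(P_2 \cdot C_1)$, I expect $L$ to achieve weight at most $M$, which forces equality.

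If this accounting does not close directly, the fallback is a minimal-counterexample argument. Among all walks that end at a previously visited vertex and have minimum weight, pick one with the fewest edges. Then show that its first repetition already ends it, by cutting out the earliest closed subwalk, which has non-negative weight by the no-negative-cycle hypothesis. The case analysis for where the earlier occurrence of the final vertex lies is the delicate part.
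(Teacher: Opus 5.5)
Your lower bound (every lasso has sum at least $M$) is fine, but the converse is false, so the algorithm outputs the wrong number. Take $v_0 = z$ and one more vertex $x$, with edges $z \xrightarrow{-1} x$ and $x \xrightarrow{1} z$. The only cycle has weight $0$, and the only lasso from $v_0$ is $z, x, z$, with sum $0$. Yet $d(v_0, x) + \gamma(x) = -1 + 0 = -1$. The reason is that the walk $P \cdot C = z, x, z, x$ uses the negative edge twice.

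In your accounting, $P_1$ is trivial, $P_2 = C_2 = (z \xrightarrow{-1} x)$ and $C_1 = (x \xrightarrow{1} z)$, so $w(C_2) = -1 < 0$. This is not an accident of the example. $P_2$ is a shortest $z$--$x$ path, so $w(C_2) \ge w(P_2)$. Also $C_1 \cdot P_2$ is a closed walk through $x$, so minimality of $C$ gives $w(C_2) \le w(P_2)$. Hence $w(C_2) = w(P_2)$ and $w(L) = M - w(P_2)$, which exceeds $M$ whenever the shared segment $P_2$ has negative weight. Minimality of $M$ at $z$ only yields $M \le w(L)$, which is the wrong direction. Your fallback breaks on the same instance. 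The minimum-weight walk ending at an already visited vertex is $z, x, z, x$, of weight $-1$. Cutting out $z, x, z$ leaves a walk that no longer ends at a repeated vertex.

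The missing idea is that a lasso requires the head and the loop to be vertex-disjoint except at the junction. Once negative edges are present, this constraint changes the optimum. Distances $d(\cdot,\cdot)$ allow edges to be reused and cannot express it.

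The paper enforces disjointness with a min-cost flow, one network per pivot $v$. Every vertex other than $v$ is split into $u_{\mathrm{in}} \to u_{\mathrm{out}}$ with capacity $1$. A super-source feeds one unit into $v_{0,\mathrm{in}}$ and one into $v_{\mathrm{out}}$, and a flow of value $2$ is sent to the sink $v_{\mathrm{in}}$. An integral optimum decomposes into a $v_0$--$v$ path and a $v$--$v$ cycle sharing only $v$, possibly plus extra circulations. The no-negative-cycle hypothesis is used exactly to discard those circulations, not merely to make shortest distances well defined.
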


\begin{proof}
    Recall that a \emph{flow network} is a directed graph with a source and a sink, and with a capacity and a cost assigned to each edge. A flow assigns a non-negative amount to every edge without exceeding its capacity. Except at the source and sink, the amount entering each vertex must equal the amount leaving it. The value of the flow is the total amount sent from the source to the sink, and its cost is the sum of the edge costs multiplied by the amounts sent through them. The \emph{minimum-cost flow problem} asks for a flow of a given value and minimum cost.

    For each vertex $v \in V$, we construct a flow network $H_v$ in which $v$ is the only vertex shared by the head and the loop. Split every vertex $u$ into $u_{\mathrm{in}}$ and $u_{\mathrm{out}}$. Except for $v$, connect $u_{\mathrm{in}}$ to $u_{\mathrm{out}}$ by an edge of cost $0$ and capacity $1$. For every edge $(u, c, w) \in E$, add an edge $u_{\mathrm{out}} \to w_{\mathrm{in}}$ of cost $c$ and capacity $1$. Add a source $s$ with edges of cost $0$ and capacity $1$ to $v_{0, \mathrm{in}}$ and $v_{\mathrm{out}}$. Take $v_{\mathrm{in}}$ as the sink and require a flow of value $2$. This construction is illustrated in~\Cref{fig:lasso-MP-flow}.

    An integral flow of value $2$ consists of a path from $v_0$ to $v$ and a path from $v$ back to $v$, possibly together with additional cycles. Such cycles can be removed without increasing the cost because $G$ has no negative cycles. The internal capacities ensure that the two paths are vertex-disjoint except at $v$. Conversely, every lasso whose head and loop share only the vertex $v$ induces a flow in $H_v$ of value $2$ with the same cost. Computing a minimum-cost flow of value $2$ can be done in polynomial time~\cite{Ford1962flows}.

    Thus, solving $H_v$ for every $v \in V$ and taking the cheapest resulting lasso yields a minimum-sum lasso in polynomial time.

    \begin{figure}[!ht]
        \centering
        \resizebox{0.8\textwidth}{!}{\usetikzlibrary{shapes.geometric}

\begin{tikzpicture}[
        transform shape,
        sqr/.style={rectangle, draw, minimum size=7mm, inner sep=0pt},
        dia/.style={diamond, draw, minimum size=7mm, inner sep=0pt},
        arr/.style={->, >=stealth},
        weight/.style={font=\small, midway}
    ]

    \begin{scope}
        \node[sqr] (v0) at (0, 0) {$v_0$};
        \node[sqr] (v1) at (2.5, 0.8) {$v_1$};
        \node[sqr] (v2) at (2.5, -0.8) {$v_2$};
        \node[sqr] (v3) at (5, 0) {$v_3$};
        \node[sqr] (v4) at (6.5, 0.8) {$v_4$};
        \node[sqr] (v5) at (6.5, -0.8) {$v_5$};

        \draw[arr] (v0) -- node[weight, above] {$c_1$} (v1);
        \draw[arr] (v0) -- node[weight, below] {$c_2$} (v2);
        \draw[arr] (v2) -- node[weight, below] {$c_3$} (v3);
        \draw[arr] (v1) -- node[weight, above] {$c_4$} (v3);

        \draw[arr, bend left] (v3) to node[weight, above] {$c_5$} (v4);
        \draw[arr, bend left] (v4) to node[weight, right] {$c_6$} (v5);
        \draw[arr, bend left] (v5) to node[weight, below] {$c_7$} (v3);
    \end{scope}

    \draw[->, >=stealth, line width=2pt, black!80] (7.5, 0) -- (8.4, 0);

    \begin{scope}[xshift=9cm]
        \node[sqr, label=above:{Source}] (s) at (0, 1.5) {$s$};
        \node[dia] (v0b) at (0, 0) {$v_0$};
        \node[dia] (v1b) at (2.5, 0.8) {$v_1$};
        \node[dia] (v2b) at (2.5, -0.8) {$v_2$};
        \node[sqr, label=below:{Sink}] (v3inb) at (5, 0) {$v_{3, \mathrm{in}}$};
        \node[sqr] (v3outb) at (5, 1.5) {$v_{3, \mathrm{out}}$};
        \node[dia] (v4b) at (6.5, 1.5) {$v_4$};

        \node[dia] (v5b) at (6.5, 0) {$v_5$};
        \draw[arr] (s) -- node[weight, left] {$0$} (v0b);
        \draw[arr] (s) -- node[weight, above] {$0$} (v3outb);

        \draw[arr] (v0b) -- node[weight, above] {$c_1$} (v1b);
        \draw[arr] (v0b) -- node[weight, below] {$c_2$} (v2b);
        \draw[arr] (v2b) -- node[weight, below] {$c_3$} (v3inb);
        \draw[arr] (v1b) -- node[weight, above] {$c_4$} (v3inb);

        \draw[arr] (v3outb) -- node[weight, above] {$c_5$} (v4b);
        \draw[arr] (v4b) -- node[weight, right] {$c_6$} (v5b);
        \draw[arr] (v5b) -- node[weight, below] {$c_7$} (v3inb);
    \end{scope}

    \begin{scope}[xshift=6.3cm, yshift=-2cm]
        \node[dia] (vi) at (0, 0) {$v_i$};
        \node[right=0.25cm of vi] (equiv) {$\equiv$};

        \node[sqr, right=0.25cm of equiv] (viin) {$v_{i, \mathrm{in}}$};
        \node[sqr, right=1.2cm of viin] (viout) {$v_{i, \mathrm{out}}$};

        \draw[arr] (viin) -- node[weight, above] {$0$} (viout);

    \end{scope}
\end{tikzpicture}}
        \caption{The flow network $H_v$ with pivot $v_3$.}
        \label{fig:lasso-MP-flow}
    \end{figure}
\end{proof}

\begin{restatable}{proposition}{propAdamMPOptimistic}\label{prop:Adam-lasso-MP-optimistic}
    The threshold problem with fixed threshold $0$ for lasso mean-payoff Adam games on arenas with no negative cycles can be solved in $\PTIME$.
\end{restatable}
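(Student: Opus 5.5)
In an Adam game every vertex belongs to Adam, so the game is effectively a single-player minimisation. Adam can produce any first-cycle play from $v_0$, i.e.\ any lasso starting at $v_0$ in the underlying graph. Hence $\val_{\G}(v_0) = \min_{\play} \MP(\play)$, where $\play$ ranges over all lassos from $v_0$. Since every lasso has positive length, $\MP(\play) \geq 0$ holds if and only if the sum of weights along $\play$ is non-negative. Therefore $\val_{\G}(v_0) \geq 0$ if and only if the \emph{minimum-sum} lasso from $v_0$ has non-negative sum.

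The plan is to reduce the threshold question to computing a minimum-sum lasso, and then apply \Cref{lem:minimal-lasso-ptime}. This is exactly where the hypothesis of no negative cycles is used. The algorithm runs the procedure of \Cref{lem:minimal-lasso-ptime} on the underlying weighted multigraph with source $v_0$. It obtains a minimum-sum lasso $\play^*$ in polynomial time and answers ``yes'' if and only if the sum of $\play^*$ is $\geq 0$.

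The key justification is the equivalence between first-cycle plays and lassos. Every maximal play is a lasso: a simple path from $v_0$ to some vertex $v$, followed by a simple cycle through $v$ that is vertex-disjoint from the head except at $v$. Conversely, every such lasso is realised by a positional-or-not Adam strategy, namely the one that follows it. This is precisely the class of objects the flow construction in \Cref{lem:minimal-lasso-ptime} ranges over, when we take the minimum over the pivot $v$.

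The only step needing care is confirming that the lemma indeed returns a genuine lasso with minimum sum, not merely a flow. Any extra cycles in an integral flow of value $2$ have non-negative weight, so discarding them does not increase the cost. Degenerate cases also need checking: pivot $v = v_0$ with trivial head, and self-loops, which make a loop of length $1$. Both are handled by the construction, with a trivial head giving flow through $s \to v_{0,\mathrm{in}}$, the sink. I expect no further obstacle. The proposition is essentially a corollary of \Cref{lem:minimal-lasso-ptime}, combined with the observation that sign of mean equals sign of sum.
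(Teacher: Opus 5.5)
Your proposal is correct and follows the paper's proof. Since the sign of the mean equals the sign of the total sum, $\val_{\G}(v_0) \geq 0$ holds exactly when the minimum-sum lasso from $v_0$ has non-negative sum, and that lasso is computed in polynomial time by \Cref{lem:minimal-lasso-ptime}. Your extra checks (realisability of every lasso by Adam, the trivial-head pivot $v = v_0$, self-loops, and discarding non-negative extra cycles in the flow) cover details the paper leaves inside the lemma, and they are sound.
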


\begin{proof}
    A lasso has mean payoff below $0$ if and only if its total sum is negative. By~\Cref{lem:minimal-lasso-ptime}, we can compute the minimum sum $S$ of a lasso from $v_0$ in polynomial time. Adam can force a value below $0$ if $S < 0$. Hence, the threshold problem is in $\PTIME$.
\end{proof}

\begin{remark}
    The algorithm minimises the total sum, which suffices for threshold $0$, but it does not compute the value of the game. It nevertheless applies to every non-positive threshold $x \leq 0$. Let $G_x$ be obtained from $G$ by replacing every edge $(u, c, v)$ with $(u, c - x, v)$. A lasso $\play$ satisfies $\MP(\play) < x$ in $G$ if and only if it has negative total sum in $G_x$. Since $x \leq 0$, this shift cannot create negative cycles, so~\Cref{lem:minimal-lasso-ptime} applies.
\end{remark}

\section{Parity}\label{sec:Parity}
We consider finite variants of \emph{parity games} in which the largest priority seen before the first cycle closes determines the winner.

\begin{definition}[Finite parity objective]
    Let the set of labels (priorities) be a finite subset of natural numbers, $\Colors \subseteq \mathbb{N}$. The \emph{finite parity objective} is the set of finite sequences where the maximum priority is even:
    \[
        \Parity = \left\{c_0 \dots c_{\ell - 1} \in \Colors^+ \; \middle|\; \max_{0 \leq i < \ell} c_i \text{ is even}\right\}.
    \]
\end{definition}

In classical infinite-duration parity games, Eve wins if the maximum priority seen \emph{infinitely often} is even. These games are positionally determined~\cite{Emerson1991,Mostowski1991}. The loop variant is equivalent to classical parity games and lies in $\NP \cap \cocompl\NP$.

Lasso parity games are equivalent to \emph{weak parity games}, in which the winner is determined by the highest priority occurring \emph{at any point} during the play. L\"oding and Thomas~\cite[Lemma~2]{weakParity} showed that weak parity games are solvable in $\PTIME$. Their algorithm, which recursively computes attractors to the highest priorities in a subarena, also solves the lasso variant. Attractor strategies are positional, thus, lasso parity games are positionally determined.

\subsection{Generalised parity objectives}\label{sec:generalize}

Positional determinacy is a strong property satisfied by relatively few objectives. We show that a significant subclass of such objectives is equivalent to the parity objective and therefore shares its tractability.

\begin{definition}[Generalised parity objective]\label{def:gen-parity}
    Let $\Colors$ be finite. An objective $\Obj \subseteq \Colors^+$ is a \emph{generalised parity objective} of order $n$ if there exists a mapping $h : \Colors \to \{0, \dots, n\}$ such that, for every finite sequence $c_0 \dots c_{\ell - 1} \in \Colors^+$, we have $c_0 \dots c_{\ell - 1} \in \Obj$ if and only if $\max_{0 \leq i < \ell} h(c_i)$ is even.
\end{definition}

We begin by formally defining closure properties. Let $\Colors$ be a finite set of labels and let $\Obj \subseteq \Colors^+$. We say $\Obj$ is:
\begin{itemize}
    \item \emph{closed under cyclic permutations} if for all $u, v \in \Colors^*$, $uv \in \Obj \implies vu \in \Obj$.
    \item \emph{closed under concatenation} if for all $u, v \in \Colors^*$, $(u \in \Obj \land v \in \Obj) \implies uv \in \Obj$.
\end{itemize}

\begin{proposition}\label{prop:generalised-weak}
    If an objective $\Obj$ is closed under cyclic permutations, both $\Obj$ and $\overline{\Obj}$ are closed under concatenation, and every game with objective $\Obj$ is positionally determined, then $\Obj$ is a generalised parity objective.
\end{proposition}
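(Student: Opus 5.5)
The plan is to read the hypotheses in the setting of first-cycle games (loop semantics), where positional determinacy has bite, and to extract the priority map $h$ from a \emph{dominance preorder} on single letters. For each label $c \in \Colors$, record its \emph{type}: even if $c \in \Obj$, odd otherwise. For two labels $a, b$ of different type, cyclic closure makes $ab$ and $ba$ have the same membership. Say $a$ \emph{dominates} $b$ if $ab$ has the type of $a$; for labels of equal type, declare them equivalent. The proof then has three steps.

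\textbf{Step 1 (membership depends only on the set of letters).} Closure of $\Obj$ and $\overline{\Obj}$ under concatenation says that gluing two words of the same type preserves that type. Together with cyclic permutation, this gives the following. If $w$ has the same type as some factorisation $w = uv$ into pieces of equal type, then repeating or reordering those pieces does not change the type. Repetitions of letters of different type are not covered by closure alone. For example, the type of $aab = a \cdot ab$ is not forced if $a$ and $ab$ differ, and then $ab$ and $aab$ could in principle differ. This gap is exactly what positional determinacy must close. I would show that no word $w$ and letter $a$ occurring in $w$ can have $w$ and $aw$ (up to rotation) of different types.

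\textbf{Step 2 (the positional-determinacy gadget).} Suppose such a counterexample exists. Build a small arena in which the first cycle closed is $w$ or $aw$ depending on a single choice made at a vertex that is visited twice in different contexts. Concretely, take a vertex $x$ owned by the player who prefers the longer word. From $x$, one option leads through the $a$-edge into a vertex chain spelling $w$ and back to $x$; the other option skips the $a$-edge. Place an opponent vertex earlier so that the context (whether $x$ is reached for the first or the second time relative to the start of the eventual loop) is controlled by the opponent. Arrange the gadget so that one context needs the $a$-choice and the other the skip-choice, while every play still closes a simple cycle whose label is $w$, $aw$, or a rotation of one of them. Then this player wins, but only with memory, contradicting positional determinacy. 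I expect this to be the main obstacle. The difficulty is that loop semantics forgets the head, so the gadget must force the loop itself to differ according to the context. To get this, I would first try the classical Gimbert--Zielonka style two-cycle arenas, adapted so that the relevant vertex lies on both simple cycles.

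\textbf{Step 3 (dominance is a total preorder and $h$ is its ranking).} Once Step 1 holds, the type of a word is a function $T$ of its letter set. By concatenation closure, $T(S \cup S') = T(S)$ whenever $T(S) = T(S')$. Using this, show that $T(S)$ equals the type of some letter in $S$. Then show that the relation ``$a$ dominates $b$'' is total, by construction, and transitive: for three letters, compute $T(\{a,b,c\})$ in two ways, as the union of $\{a,b\}$ with $\{c\}$ and as the union of $\{b,c\}$ with $\{a\}$. It follows that $T(S)$ is the type of a maximal letter of $S$. Finally, list the equivalence classes of the preorder in increasing order. Merge consecutive classes of the same type, and assign to the merged blocks consecutive integers $0, 1, \dots, n$, where the lowest block receives $0$ if it is even and $1$ otherwise, and parities alternate thereafter. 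This defines $h$ with the parity of $h(c)$ equal to the type of $c$. Then $w \in \Obj$ if and only if $\max_i h(c_i)$ is even, which is exactly the condition in \Cref{def:gen-parity}.
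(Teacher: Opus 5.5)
\textbf{Gap 1: the semantics.} You read ``every game with objective $\Obj$ is positionally determined'' for first-cycle games in loop semantics. Under that reading the proposition is false, so your Step~2 cannot be carried out.

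Take $\Colors = \{-1, 1\}$ and let $\Obj$ be the set of non-empty words with non-negative sum. Then:
\begin{itemize}
\item $\Obj$ is closed under cyclic permutations, and $\Obj$ and $\overline{\Obj}$ are both closed under concatenation;
\item in loop semantics these are threshold-$0$ first-cycle mean-payoff games, which are positionally determined by Ehrenfeucht and Mycielski (the paper itself relies on this in \Cref{sec:memory});
\item yet $w = 1 \cdot (-1) \in \Obj$ while $(-1) \cdot w \notin \Obj$, which is exactly the configuration your Step~2 is meant to exclude;
\item $\Obj$ is not a generalised parity objective, since both words use the same letters.
\end{itemize}
So the obstruction you anticipated (the loop forgets the head) is a counterexample, not a technical hurdle.

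The paper works in lasso semantics, where the head counts, and the gadget is explicit. Eve at $v_1$ takes one of two $b$-edges to $v_{1,L}$ or $v_{1,R}$. Each is followed by an $a$-edge to Adam's vertex $v_2$, which has $w'$-paths back to both. If Adam returns to Eve's side, the lasso reads $baw'$. If he switches sides, one more $a$ is forced and the play closes at $v_2$, reading $baw'a$. Suppose $aw'b \notin \Obj$. Adam then wins by copying Eve's side. Yet each positional choice at $v_2$ loses to Eve taking the other side: $aw', ab \in \Obj$ give $aw'ab \in \Obj$, and hence its rotation $baw'a \in \Obj$. This proves \Cref{lemma:dominating}, which takes the place of your letter-set invariance. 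A variant of the same gadget, with $xay$ ($x, y$ non-empty) in place of $baw'$, would essentially give your Step~2 claim, but only in lasso semantics.

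\textbf{Gap 2: the preorder in Step~3.} Declaring all same-type letters equivalent does not give a preorder. Consider the ordinary parity objective on priorities $\{0,1,2\}$, which satisfies every hypothesis. There $1$ dominates $0$ and $2$ dominates $1$, while your definition makes $0 \sim 2$. This is also why ``merging consecutive classes of the same type'' is inconsistent with your definition.

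Nor can a three-letter computation give transitivity. $T(\{a,b\} \cup \{c\})$ is determined only when $T(\{a,b\})$ equals the type of $c$.

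What is needed is to compare same-type letters by the sets of opposite-type letters they dominate (the paper's $g(a)$ and $\sqsubseteq$). Totality then follows from a four-letter argument that uses only the closure properties. If $b \in g(a) \setminus g(a')$ and $b' \in g(a') \setminus g(a)$, then $aba'b' \in \Obj$, while its rotation $(ba')(b'a)$ lies in $\overline{\Obj}$ (\Cref{lemma:preorder}). With the resulting block order (\Cref{lemma:partition}), your closing argument that $T(S)$ is the type of a maximal letter of $S$ does go through.
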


In the rest of this section, we prove \Cref{prop:generalised-weak}. We closely follow the proof by Colcombet and Niwiński~\cite{niwek2006}. However, we cannot apply their result directly as it is stated for infinite-duration games and relies on properties of infinite plays. Instead, we adapt their argument to the finite lasso setting. Assume the hypotheses of the proposition and let
\[
    \Colors_\Eve = \Colors \cap \Obj,
    \qquad
    \Colors_\Adam = \Colors \setminus \Colors_\Eve.
\]

\begin{restatable}{lemma}{lemmaDominating}\label{lemma:dominating}
    If $a \in \Colors_\Eve$ and $B \subseteq \Colors$ satisfy $aB \subseteq \Obj$, then $aB^* \subseteq \Obj$.
\end{restatable}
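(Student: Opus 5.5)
The plan is to argue by contradiction, taking a shortest counterexample and turning it into a game that Eve wins only with memory. The contradiction then comes from the assumption that every game with objective $\Obj$ is positionally determined.

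First, the case $w = \varepsilon$ is the hypothesis $a \in \Colors_\Eve$. The case $|w| = 1$ is the hypothesis $aB \subseteq \Obj$. Now suppose there is $w = b_1 \dots b_k \in B^*$ with $aw \notin \Obj$, and choose $k \geq 2$ minimal. By minimality, $a b_1 \dots b_{k-1} \in \Obj$, and also $a b_k \in \Obj$. The two closure properties then give words in $\Obj$ that contain $aw$ with extra occurrences of $a$ inserted:
\begin{itemize}
\item closure under concatenation gives $a b_1 \dots b_{k-1} \, a \, b_k \in \Obj$;
\item closure under cyclic permutation turns this into, for example, $a b_k a b_1 \dots b_{k-1} \in \Obj$.
\end{itemize}
In the same way, every word obtained from $aw$ by inserting at least one extra $a$ after a proper prefix lies in $\Obj$, by induction on the number of blocks. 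In contrast, $aw$ itself, and by cyclic permutation each of its rotations, lies in $\overline{\Obj}$. Since $\overline{\Obj}$ is closed under concatenation, powers of these rotations also lie in $\overline{\Obj}$.

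The key step is a small arena in which these two facts clash. I would place the letters $b_1, \dots, b_k$ on a simple cycle of Adam's vertices. An Eve vertex $e$ sits at one point of the cycle and has two options: continue along the $b$-cycle, or take an $a$-labelled detour through a fresh vertex that returns into the cycle. The initial part of the arena would be an Adam gadget through which Adam can reach $e$ along two different heads, with a different prefix of $w$ already read on each. On one head, Eve must insert an $a$ to avoid closing the lasso on the bad word $aw$ (or a rotation of it). On the other head, inserting $a$ creates a loop whose label is a rotation of $aw$, while continuing yields a word in $\Obj$. Thus Eve has a winning strategy with memory, namely choosing according to the head, but each of her two positional choices loses against one of Adam's heads. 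This contradicts positional determinacy, so $aB^* \subseteq \Obj$.

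The main obstacle is the exact design of this gadget in the first-cycle setting. Every vertex is visited at most once before the loop closes, so the only memory Eve can exploit is the route by which $e$ was reached. The labels on both heads and on the detour must be tuned so that each of Eve's positional choices is refuted by a concrete lasso whose labels are provably in $\overline{\Obj}$. Showing this will use only that rotations and concatenations of $aw$ stay in $\overline{\Obj}$ and that words with inserted $a$'s are in $\Obj$. I would first try the simplest candidate: a head labelled $a$ entering the cycle just before $b_1$ versus a head entering just before $b_k$. I would verify the four resulting lassos by hand, and adjust the detour's re-entry point if one of them fails.
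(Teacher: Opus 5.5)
Your algebraic preparation is sound: minimality gives $aB^{<k}\subseteq\Obj$, hence $ab_1\cdots b_{k-1}\,ab_k\in\Obj$ and all its rotations, while every rotation of $aw$ lies in $\overline{\Obj}$. (One correction: inserting $a$ after the \emph{empty} prefix gives $aaw$, which your block argument does not cover, so restrict to non-empty proper prefixes.) The genuine gap is the game itself. It is the only non-routine step, and it is never constructed. Moreover, the orientation you chose, with Eve needing memory, works against the facts you have. You need a head on which Eve's $a$-detour provably loses, yet the only thing you know about inserting $a$ strictly inside $w$ is that it helps Eve. In lasso semantics it is also not enough that the loop reads a rotation of $aw$: the head's labels belong to the word, so the whole lasso word must be certified in $\overline{\Obj}$.

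Your simplest candidate shows the failure concretely. Take the cycle $c_0\xrightarrow{b_1}c_1\cdots c_{k-1}\xrightarrow{b_k}c_0$, Eve's vertex $e=c_i$ with detour $c_i\xrightarrow{a}d\xrightarrow{b_{i+1}}c_{i+1}$, and heads $v_0\xrightarrow{a}c_0$ and $v_0\xrightarrow{a}c_{k-1}$. For $1\le i\le k-2$, the detour produces $ab_1\cdots b_i\,ab_{i+1}\cdots b_k$ on the first head and $ab_kb_1\cdots b_i\,ab_{i+1}\cdots b_{k-1}$ on the second. Both are concatenations of words in $aB^{<k}$, so the positional strategy ``always detour'' wins and no contradiction arises. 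For $i=k-1$, the second head yields $ab_kb_1\cdots b_{k-1}$ and $aab_kb_1\cdots b_{k-1}$, neither of which your facts classify.

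The missing idea is to let \emph{Adam} be the player who needs memory, so that Eve's extra $a$ becomes her punishment of a positional Adam. Let Eve at $v_1$ choose between two $b_k$-edges to Adam vertices $v_L$ and $v_R$. Each has a single $a$-edge to an Adam vertex $v_2$, from which two disjoint paths labelled $b_1\cdots b_{k-1}$ lead back to $v_L$ and to $v_R$. If Adam returns to the copy Eve chose, the lasso reads $b_kab_1\cdots b_{k-1}$, a rotation of $aw$. So Adam has a winning strategy, and by positional determinacy he has a positional one, which fixes the target at $v_2$. Eve then enters the other copy, and the play closes at $v_2$ with $b_kab_1\cdots b_{k-1}a$. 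This is a rotation of your word $ab_1\cdots b_{k-1}ab_k\in\Obj$, a contradiction. This is exactly the paper's argument, phrased there as induction on $|w|$ with $w'=b_1\cdots b_{k-1}$ and $b=b_k$.
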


\begin{proof}
    We prove this by induction on the length of the words in $B^+$. The base cases follow directly from the assumption. For the induction step, fix $n \geq 1$, assume that $aw' \in \Obj$ for all $w' \in B^n$, and let $w = w'b \in B^{n + 1}$.

    Assume for the sake of contradiction that $aw'b \notin \Obj$. Consider the game depicted in~\Cref{fig:Parity-dominating}, with winning condition $\Obj$. Since $aw'b \notin \Obj$, Adam has a winning strategy. Because the game is positionally determined, he has a positional winning strategy $\strA$. Without loss of generality, assume that $\strA(v_2) = (v_2, w', v_{1, L})$. Eve can then win by playing $\strE(v_1) = (v_1, b, v_{1, R})$. The resulting play has label sequence $baw'a$. Since $\Obj$ is closed under concatenation, $aw' \in \Obj$ and $ab \in \Obj$ imply $aw'ab \in \Obj$. Furthermore, because $\Obj$ is closed under cyclic permutations, $aw'ab \in \Obj \implies baw'a \in \Obj$. Thus, the play is winning for Eve, contradicting Adam's winning strategy.
\end{proof}

\begin{figure}[htbp]
    \centering
    \begin{tikzpicture}[
        circ/.style={circle, draw, minimum size=7mm, inner sep=0pt},
        sqr/.style={rectangle, draw, minimum size=7mm, inner sep=0pt},
        arr/.style={->, >=stealth}, weight/.style={font=\small, inner sep=2pt}
    ]
    \def\xsep{2}
    \def\ysep{1}

    \node[circ] (v1) at (0, 0) {$v_1$};
    \node[sqr] (v1R) at (\xsep, \ysep) {$v_{1, R}$};
    \node[sqr] (v1L) at (\xsep, -\ysep) {$v_{1, L}$};
    \node[sqr] (v2) at (2 * \xsep, 0) {$v_2$};

    \draw[arr] (v1) -- node[weight, above left] {$b$} (v1R);
    \draw[arr] (v1) -- node[weight, below left] {$b$} (v1L);
    \draw[arr] (v1R) to[bend left=18] node[weight, above right] {$a$} (v2);
    \draw[arr] (v2) to[bend left=18] node[weight, left] {$w'$} (v1R);
    \draw[arr] (v1L) to[bend right=18] node[weight, below right] {$a$} (v2);
    \draw[arr] (v2) to[bend right=18] node[weight, left] {$w'$} (v1L);
\end{tikzpicture}
    \caption{Arena for the proof of~\Cref{lemma:dominating}, with initial vertex $v_1$.}
    \label{fig:Parity-dominating}
\end{figure}

For $a \in \Colors_{\Eve}$, let $g(a) = \{ b \in \Colors_{\Adam} \mid ab \in \Obj \}$. We define the relation $\sqsubseteq$ on $\Colors_{\Eve}$ by
\[
    a \sqsubseteq a' \iff g(a) \subseteq g(a').
\]
We define $\sim$ as the equivalence relation associated with $\sqsubseteq$, and $\sqsubset$ as the strict ordering.

\begin{restatable}{lemma}{lemmaPreorder}\label{lemma:preorder}
    The relation $\sqsubseteq$ is a total preorder.
\end{restatable}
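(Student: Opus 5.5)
Reflexivity and transitivity of $\sqsubseteq$ come straight from the fact that set inclusion is reflexive and transitive. So the real content is \emph{totality}: for all $a, a' \in \Colors_\Eve$ we need $g(a) \subseteq g(a')$ or $g(a') \subseteq g(a)$. I would argue by contradiction. Assume there are $b \in g(a) \setminus g(a')$ and $b' \in g(a') \setminus g(a)$, so that
\[
ab \in \Obj,\quad a'b \notin \Obj,\quad a'b' \in \Obj,\quad ab' \notin \Obj .
\]
This forces $a \neq a'$ and $b \neq b'$. The aim is a small game with objective $\Obj$ that Eve wins but only by using memory, which contradicts positional determinacy.

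\textbf{The arena.} Take two vertices. Adam's vertex $y$ is the initial vertex and has two edges $y \xrightarrow{b} x$ and $y \xrightarrow{b'} x$. Eve's vertex $x$ has two edges $x \xrightarrow{a} y$ and $x \xrightarrow{a'} y$. Every first-cycle play from $y$ has the form $y \xrightarrow{d} x \xrightarrow{e} y$ with $d \in \{b, b'\}$ and $e \in \{a, a'\}$. The head is trivial, so the loop and lasso semantics agree and the label word is $de$.

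\textbf{Eve wins using memory.} Eve answers $b$ with $a$ and $b'$ with $a'$. Since $ab \in \Obj$ and $\Obj$ is closed under cyclic permutations, $ba \in \Obj$. In the same way $b'a' \in \Obj$. So every play consistent with this strategy is won by Eve. By Zermelo's theorem the game is determined, so Adam has no winning strategy from $y$, and $y$ lies in Eve's winning region.

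\textbf{No positional strategy wins.} A positional strategy of Eve picks a single edge at $x$, either $a$ or $a'$.
\begin{itemize}
\item If she always plays $a$, Adam plays $b'$ and the play reads $b'a$. If $b'a$ were in $\Obj$, closure under cyclic permutations would give $ab' \in \Obj$, which is false. So $b'a \notin \Obj$.
\item If she always plays $a'$, Adam plays $b$. The same argument shows $ba' \notin \Obj$, because $a'b \notin \Obj$.
\end{itemize}
So Eve has no positional winning strategy from a vertex in her winning region. This contradicts the assumption that every game with objective $\Obj$ is positionally determined, and totality follows.

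\textbf{Main obstacle.} The only real design question is which player should move first. If Eve moves first, the game is simply won by Adam and no contradiction arises. Letting Adam move first forces Eve's reply to depend on his previous move, which is exactly what a positional strategy cannot do. Once that is fixed, the rest is bookkeeping, using only closure under cyclic permutations in both directions. In particular, the hypothesis $b, b' \in \Colors_\Adam$ is not needed for this step.
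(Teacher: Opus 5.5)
Your proof is correct, but it takes a different route from the paper.

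\textbf{The paper's route.} The paper builds no game here. Assuming $b \in g(a)\setminus g(a')$ and $b' \in g(a')\setminus g(a)$, it proceeds in three steps.
\begin{itemize}
\item It concatenates the winning words $ab, a'b' \in \Obj$ to get $aba'b' \in \Obj$.
\item It concatenates the losing words $ba', b'a \in \overline{\Obj}$, obtained by rotating $a'b, ab' \notin \Obj$, to get $ba'b'a \in \overline{\Obj}$.
\item It observes that these two words are cyclic rotations of each other, which is a contradiction.
\end{itemize}
So the paper uses closure of both $\Obj$ and $\overline{\Obj}$ under concatenation, together with closure under cyclic permutations. It does not use positionality at all.

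\textbf{Your route.} Your argument uses only closure under cyclic permutations plus Eve's positionality on a two-vertex arena. It needs no concatenation closure.

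\textbf{What each buys.} The paper's version is a purely combinatorial fact about words. It holds for any objective closed under cyclic permutations whose language and complement are both closed under concatenation, with no game-theoretic hypothesis. Yours shows that totality of $\sqsubseteq$ already follows from positionality for Eve alone, on very small arenas. It is in the same game-based style as the proof of \Cref{lemma:dominating}. Since all hypotheses of \Cref{prop:generalised-weak} are available, either argument suffices there.

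\textbf{Minor remarks.} The appeal to Zermelo's theorem is unnecessary: exhibiting Eve's memory-based winning strategy already places $y$ in her winning region by definition. Your observation that the head is trivial correctly makes the argument independent of the semantics.
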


\begin{proof}
    Suppose $a, a' \in \Colors_{\Eve}$ are incomparable. Then there exist $b \in g(a) \setminus g(a')$ and $b' \in g(a') \setminus g(a)$. By definition, $ab \in \Obj$ and $a'b' \in \Obj$. By assumption, $\Obj$ is closed under concatenation, so $aba'b' \in \Obj$. Similarly, because $b \notin g(a')$, we have $a'b \in \overline{\Obj}$. By cyclic permutation, $ba' \in \overline{\Obj}$. Using the same argument, $b'a \in \overline{\Obj}$. Because $\overline{\Obj}$ is also closed under concatenation, $(ba')(b'a) = ba'b'a \in \overline{\Obj}$. However, cyclically permuting $aba'b' \in \Obj$ yields $ba'b'a \in \Obj$, a contradiction.
\end{proof}

\begin{restatable}{lemma}{lemmaPartition}\label{lemma:partition}
    There exist an integer $n$ and $2n$ subsets of $\Colors$, $B_1, \dots, B_{2n}$, such that:
    \begin{enumerate}
        \item The sets $B_i$ form a partition of $\Colors$.
        \item For $k \in [n]$, $B_{2k} \subseteq \Obj$.
        \item For $k \in [n]$, $B_{2k - 1} \subseteq \overline{\Obj}$.
        \item For $k \in [n]$ and $l \leq 2k$, $B_{2k} B_l \subseteq \Obj$.
        \item For $k \in [n]$ and $l \leq 2k - 1$, $B_{2k - 1} B_l \subseteq \overline{\Obj}$.
    \end{enumerate}
\end{restatable}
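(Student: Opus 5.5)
The plan is to build the partition top-down by induction on $|\Colors|$. At each stage I peel off a block of \emph{dominating} colours and then recurse on the remaining alphabet. Empty blocks are allowed, so that Eve's and Adam's blocks alternate as the indexing requires. The restriction of $\Obj$ to a subalphabet $\Colors' \subseteq \Colors$, namely $\Obj \cap \Colors'^+$, still satisfies all the hypotheses of \Cref{prop:generalised-weak}: the closure properties are inherited, and a game over $\Colors'$ is in particular a game with objective $\Obj$. So the induction is legitimate. In fact only the closure properties and \Cref{lemma:preorder} should be needed; positional determinacy enters only through \Cref{lemma:preorder}.

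The key step is to show the following. If $\Colors \neq \emptyset$, then either some $a \in \Colors_\Eve$ satisfies $a\Colors \subseteq \Obj$, or some $b \in \Colors_\Adam$ satisfies $b\Colors \subseteq \overline{\Obj}$.

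\emph{Case $\Colors_\Eve = \emptyset$.} Every word is in $\overline{\Obj}$ by closure of $\overline{\Obj}$ under concatenation, so any $b$ works.

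\emph{Case $\Colors_\Eve \neq \emptyset$.} Take $a$ to be $\sqsubseteq$-maximal in $\Colors_\Eve$; this exists by \Cref{lemma:preorder}, since a total preorder on a finite set has a maximum. For $a' \in \Colors_\Eve$, we have $aa' \in \Obj$ by concatenation. So if $g(a) = \Colors_\Adam$, then $a\Colors \subseteq \Obj$ and we are done.

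Otherwise, pick $b \in \Colors_\Adam \setminus g(a)$. By maximality of $a$, $g(x) \subseteq g(a)$ for every $x \in \Colors_\Eve$, so $b \notin g(x)$, i.e.\ $xb \in \overline{\Obj}$. Cyclic permutation (applied to $\Obj$ in contrapositive form, which shows $\overline{\Obj}$ is also closed under cyclic permutations) then gives $bx \in \overline{\Obj}$. For $b' \in \Colors_\Adam$, we have $bb' \in \overline{\Obj}$ by concatenation of $\overline{\Obj}$. Hence $b\Colors \subseteq \overline{\Obj}$.

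Now set $D_\Eve = \{a \in \Colors_\Eve \mid a\Colors \subseteq \Obj\}$ and $D_\Adam = \{b \in \Colors_\Adam \mid b\Colors \subseteq \overline{\Obj}\}$. By the key step at least one of them is non-empty, and they are disjoint.

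If $D_\Eve \neq \emptyset$, take $B_{2n} = D_\Eve$ as the top block. Apply the induction hypothesis to $\Colors \setminus D_\Eve$, obtaining blocks $B'_1, \dots, B'_{2m}$. If $B'_{2m} = \emptyset$ this is fine. Otherwise, insert an empty odd block $B_{2n-1} = \emptyset$ between the recursive partition and $D_\Eve$, so that $n = m+1$.

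Symmetrically, if $D_\Eve = \emptyset$, take $B_{2n} = \emptyset$ and $B_{2n-1} = D_\Adam$, and recurse on $\Colors \setminus D_\Adam$.

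Properties 1--3 are immediate. For properties 4 and 5 there are two cases:
\begin{itemize}
\item When $B_{2k}$ or $B_{2k-1}$ is a top block, the required inclusion is exactly the defining property $a\Colors \subseteq \Obj$ (resp.\ $b\Colors \subseteq \overline{\Obj}$), which covers every $B_l$.
\item When the block comes from the recursion, all $B_l$ with smaller index also lie in the smaller alphabet, so the inclusion is inherited from the induction hypothesis for the restricted objective.
\end{itemize}

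I expect the main obstacle to be the existence of a dominating colour. The resolution is that a colour $b$ outside the largest $g(a)$ lies outside every $g(x)$, which turns \Cref{lemma:preorder} into the needed dichotomy. The remaining work is bookkeeping with the alternating indices and empty blocks.
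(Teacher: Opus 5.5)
Your proof is correct, but it takes a different route from the paper's.

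\textbf{The paper's route.} The paper builds the partition in one shot. It lists the $\sqsubseteq$-classes $C_1 \sqsubset \cdots \sqsubset C_p$ of $\Colors_\Eve$ and sets $B_{2k} = C_k$. It then interleaves Adam's colours along the nested chain $g(C_1) \subseteq \cdots \subseteq g(C_p)$: $B_1 = g(C_1)$, $B_{2k+1} = g(C_{k+1}) \setminus g(C_k)$, $B_{2p+1} = \Colors_\Adam \setminus g(C_p)$, and $B_{2p+2} = \emptyset$. Properties 4 and 5 are then checked by index arithmetic using this nesting.

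\textbf{Your route.} You instead prove a local dichotomy: some colour dominates the whole current alphabet. You then peel off the dominating block and recurse. Properties 4 and 5 thus reduce to the defining property of each top block plus inheritance to the subalphabet. Your recursion in fact reconstructs essentially the paper's partition from the top down. At the top level, $D_\Adam = \Colors_\Adam \setminus g(C_p)$, and $D_\Eve = C_p$ exactly when $g(C_p) = \Colors_\Adam$; so for $\Colors_\Eve \neq \emptyset$ exactly one of them is non-empty.

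\textbf{Comparison.} Your approach makes the verification uniform and avoids the index bookkeeping. It also explicitly handles the degenerate case $\Colors_\Eve = \emptyset$, where the paper's $B_1 = g(C_1)$ is undefined. In exchange, you must check that the hypotheses pass to subalphabets and manage empty blocks. The paper's construction is direct and makes visible that the partition is determined by the preorder alone.

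\textbf{Small points.}
\begin{itemize}
\item The proof of \Cref{lemma:preorder} uses only the closure properties. So positional determinacy does not enter your argument at all, and you need not check it for the restricted objective.
\item In the case $B'_{2m} = \emptyset$, say explicitly that you replace $B'_{2m}$ by $D_\Eve$. Appending $\emptyset, D_\Eve$ also works.
\item State the base case $\Colors = \emptyset$, for instance with $B_1 = B_2 = \emptyset$.
\end{itemize}
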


\begin{proof}
    Let $C_1 \sqsubset \cdots \sqsubset C_p$ be the equivalence classes induced by $\sqsubseteq$, and write $g(C_i)$ for $g(c)$, where $c$ is any element of $C_i$. Let $n = p + 1$, and define the sets $B_i$ as follows:
    \begin{align*}
        B_1        & = g(C_1),                                                                       \\
        B_{2k}     & = C_k,                              &  & \text{for } k \in \{1, \dots, p\},     \\
        B_{2k + 1} & = g(C_{k + 1}) \setminus g(C_k),    &  & \text{for } k \in \{1, \dots, p - 1\}, \\
        B_{2p + 1} & = \Colors_{\Adam} \setminus g(C_p),                                             \\
        B_{2p + 2} & = \emptyset.
    \end{align*}
    By construction, the sets $B_i$ for even $i$ partition $\Colors_{\Eve}$, and the sets $B_i$ for odd $i$ partition $\Colors_{\Adam}$.

    \emph{Property 4.} Consider $k \in [p]$ and $l \leq 2k$. If $l$ is even, then $B_{2k} \subseteq \Obj$ and $B_l \subseteq \Obj$. Since $\Obj$ is closed under concatenation, $B_{2k}B_l \subseteq \Obj$. If $l$ is odd, then by definition $B_l \subseteq g(B_{l + 1})$, and thus $B_l \subseteq g(B_{2k})$ since $B_{l + 1} \sqsubseteq B_{2k}$. By the definition of $g$, we have $B_{2k}B_l \subseteq \Obj$. Finally, for $k = p + 1$, $B_{2k} = \emptyset$, so $B_{2k}B_l \subseteq \Obj$ for every $l$.

    \emph{Property 5.} Let $k \in [n]$ and $l \leq 2k - 1$. If $l$ is odd, then both $B_{2k - 1} \subseteq \overline{\Obj}$ and $B_l \subseteq \overline{\Obj}$. Since $\overline{\Obj}$ is closed under concatenation, $B_{2k - 1}B_l \subseteq \overline{\Obj}$. Suppose that $l$ is even; then $l \leq 2k - 2$. By definition, $B_l = C_j$ for some $j \leq k - 1$, which implies $B_l \sqsubseteq C_{k - 1}$. Furthermore, for any $b \in B_{2k - 1}$, we have $b \notin g(C_{k - 1})$. Therefore, for any $a \in B_l$ and $b \in B_{2k - 1}$, we have $b \notin g(a)$. Consequently, $ab \in \overline{\Obj}$.
\end{proof}

\begin{proof}[Proof of \Cref{prop:generalised-weak}]
    Let $B_1, \dots, B_{2n}$ be given by~\Cref{lemma:partition}, and define $h(c) = i$ whenever $c \in B_i$. For a word $w$, choose a letter $c \in B_i$ of maximal priority and cyclically permute $w$ into $cw'$. Let $D_i = \bigcup_{j \leq i} B_j$; then $w' \in D_i^*$.

    If $i$ is even, \Cref{lemma:partition} gives $c \in \Obj$ and $cD_i \subseteq \Obj$, so \Cref{lemma:dominating} implies $cw' \in \Obj$. If $i$ is odd, the symmetric lemma for $\overline{\Obj}$ yields $cw' \in \overline{\Obj}$. By closure under cyclic permutations, $w \in \Obj$ in the even case and $w \in \overline{\Obj}$ in the odd case, proving that $w \in \Obj$ exactly when $\max h(w)$ is even.
\end{proof}

\begin{remark}\label{rem:generalised-complexity}
    Applying $h$ to the arena labels reduces generalised parity games to parity games, so lasso games with such objectives are solvable in $\PTIME$. Moreover, $h$ depends only on membership of one- and two-letter words. Therefore, if membership can be tested in polynomial time, then $h$ can also be computed in polynomial time.
\end{remark}

\section{Mod-\texorpdfstring{$k$}{k} Objective}\label{sec:Mod-k}
We consider games in which Eve wins if the loop (in loop semantics) or the entire play (in lasso semantics) has length divisible by $k$. Both variants are $\PSPACE$-complete for every fixed $k \geq 2$. In the one-player setting, deciding whether a directed graph contains a simple cycle of odd length~\cite{Thomassen1985} or even length~\cite{Robertson1999,McCuaig2004} is in $\PTIME$. For every fixed $k > 2$ and every fixed residue $r \in \{1, \ldots, k - 1\}$, deciding whether a directed graph contains a simple cycle whose length is congruent to $r$ modulo $k$ is $\NP$-complete~\cite[Theorem~1]{Arkin1991ModularityOC}, whereas the complexity of deciding whether it contains a simple cycle whose length is divisible by $k$ remains open.

\begin{definition}[$\ModCycle$ objective]
    Let $\Colors$ be a singleton and fix $k \geq 2$. The \emph{mod-$k$ objective} is the set of all finite sequences whose length $\ell$ is a multiple of $k$:
    \[
        \ModCycle = \{w \in \Colors^* \mid |w| \equiv 0 \pmod{k}\}.
    \]
\end{definition}

We give a single construction that establishes hardness in both semantics and for every fixed $k \geq 2$. In the figures and the construction below, an edge labelled $\ell$ represents a path of length $\ell$.

\begin{proposition}\label{prop:mod-k-complexity}\label{prop:loop-Even-complexity}\label{prop:lasso-EvenLen-complexity}
    For every fixed $k \geq 2$, solving mod-$k$ games is $\PSPACE$-complete in both loop and lasso semantics.
\end{proposition}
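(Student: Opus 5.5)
Membership in $\PSPACE$ follows from \Cref{prop:lasso-complexity} for the lasso semantics, since membership in $\ModCycle$ is decidable by counting. For the loop semantics, the same depth-first evaluation of the game tree of depth at most $n+1$ works: at a leaf we identify the index $j$ of the repeated vertex and compute the loop length modulo $k$. So the substance is hardness, via a reduction from $\QBFcnf$ modelled on the proof of \Cref{thm:lasso-MP-complexity}. The idea is to replace weights by path lengths chosen so that the residue modulo $k$ encodes the outcome.

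The construction keeps the same skeleton: a chain of variable gadgets $v_i \to \{t_i, f_i\} \to v_{i+1}$, then Adam's vertex $c$ choosing a clause, then Eve choosing a literal with an edge back to $t_i$ or $f_i$. Edges labelled $\ell$ stand for paths of $\ell$ fresh vertices. I will choose lengths so that the following hold.

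\begin{enumerate}
\item Closing a cycle at a visited $t_i$ or $f_i$ is winning for Eve.
\item Entering an unvisited literal vertex forces a continuation that closes a cycle of bad residue.
\end{enumerate}

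\textbf{Choosing lengths.} The loop closed by a satisfying literal at $t_i$ runs $t_i \to v_{i+1} \to \cdots \to c \to C_j \to$ literal $\to t_i$. Its length depends on $i$ and on the gadget paths, so I make every gadget segment a multiple of $k$ (e.g.\ $v_i\to t_i$ of length $k$, $t_i \to v_{i+1}$ of length $k$). I then tune the path from the clause vertex to $t_i$ so that every such loop has length $\equiv 0 \pmod k$. For the lasso semantics the head $v_0 \to \cdots \to t_i$ is also a multiple of $k$, so the lasso length is $\equiv 0$ too.

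\textbf{Unvisited literal.} If Eve enters an unvisited $t_i$ (or $f_i$), the play continues deterministically along $t_i \to v_{i+1}$, which was visited, closing a cycle. That cycle has length equal to the satisfying loop plus one extra segment. I insert, on the edge $t_i \to v_{i+1}$ only, an extra detour of length $1$. To keep item 1 true, I place the literal's re-entry point strictly before that detour: I split $t_i$ into $t_i \to t'_i \to v_{i+1}$, let the assignment path pass through both, and have literal edges target $t'_i$. Then an unvisited target reaches the already visited $v_{i+1}$ after $1$ extra step, off by $1$ from a multiple of $k$, so Eve loses, and symmetrically for $f_i$. Because $k \ge 2$, a shift by $1$ is never $0 \bmod k$. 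The same offset computation applies to both loop and lasso lengths, since the head is unaffected; this gives one construction for both semantics.

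\textbf{Main obstacle.} The hard part is making the residues uniform across all $i$ and all clauses while ensuring that no unintended early cycle is closed. Since $v_i$ is entered from both $t_{i-1}$ and $f_{i-1}$, no repetition can occur before $c$; I must check that the literal paths are fresh so the first repeated vertex is exactly the intended one. I would handle uniformity by making each literal path length depend on $i$, namely $\equiv -(\text{distance from } t'_i \text{ to } c) - 1 \pmod k$, and then verify the two cases of the correctness claim exactly as in \Cref{thm:lasso-MP-complexity}: Adam exposes a falsified clause when $\Phi$ is false, and Eve picks a true literal when $\Phi$ is true.
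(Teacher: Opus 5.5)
\paragraph{Review.}
Your membership argument is fine, and the lasso half of your reduction can be made to work. However, the claim that the same chain construction handles the loop semantics fails, and this is a genuine gap.

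Fix a literal whose path enters $t'_i$ from the clause vertex $C_j$. If $x_i$ was set to true, the play stops at $t'_i$ and the loop is
\[
t'_i \to v_{i+1} \to \cdots \to c \to C_j \to \cdots \to t'_i .
\]
If $x_i$ was set to false, the play is forced along the very same edge $t'_i \to v_{i+1}$ and stops at $v_{i+1}$, so the loop is
\[
v_{i+1} \to \cdots \to c \to C_j \to \cdots \to t'_i \to v_{i+1}.
\]
This is the same cycle, merely rotated, so it has the same length. Your statement that it is ``the satisfying loop plus one extra segment'' overlooks that the satisfying loop already contains the segment from $t'_i$ to $v_{i+1}$. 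So once you tune the literal paths so that true literals close loops of length divisible by $k$, false literals do too, and Eve wins the loop game for every $\Phi$.

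The extra step lands in the head, not the loop. That is why the \emph{lasso} length does shift by $1$, so your construction does prove hardness in lasso semantics. You should, however, tune the literal paths so that the whole play up to $t'_i$ has length divisible by $k$. Your claim that the head $v_0 \to \cdots \to t'_i$ is a multiple of $k$ is inconsistent with the length-$1$ detours in the earlier gadgets. More generally, whenever the continuation from an unvisited literal target is forced back onto the assignment path, the closed loop is a rotation of the one a true literal would close. So loop semantics needs a \emph{choice} at the literal target.

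This is the missing idea, and it is what the paper's gadget supplies. Adam owns $t_i$ and $f_i$ and has cross-edges $t_i \xrightarrow{1} f_i$ and $f_i \xrightarrow{k-1} t_i$. When Eve enters an unvisited $d_i$, Adam jumps to the visited $\bar d_i$ and closes a genuinely different loop, of residue $1$ or $k-1$.

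Since the cross-edges are also available during the assignment phase, the paper must punish deviations. Eve gets backtracking edges $v'_i \xrightarrow{k-1} t_i$ and $v'_i \xrightarrow{1} f_i$. \Cref{clm:mod-k-gadget} checks that an Adam deviation lets Eve close a cycle of length $k$ or $2k$, while an Eve deviation lets Adam close one of length $2k-1$ or $k+1$. Because every unlabelled segment has length $k$, each repeated vertex is first reached after a multiple of $k$ steps. Hence the loop and lasso residues coincide, and one construction serves both semantics. This deviation analysis is the real content of the hardness proof, and your proposal has no counterpart to it.
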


\begin{proof}
    Membership follows from~\cite[Theorem~1]{Aminof2017} and from~\Cref{prop:lasso-complexity}.

    For hardness, we reduce from $\QBF$. Given $\Phi = Q_1 x_1 \cdots Q_n x_n \, \phi$, with $\phi$ in 3-CNF, we construct $\G_\Phi$ such that Eve wins $\G_\Phi$ if and only if $\Phi$ is true. The variable assignment phase consists of $n$ connected \emph{variable gadgets} (see~\Cref{fig:mod-k-gadget}), each corresponding to one variable of the formula $\Phi$. The $i$-th gadget begins at a vertex $v_i$ controlled by the player associated with the quantifier $Q_i$. From $v_i$, that player assigns $\True$ or $\False$ to $x_i$ by moving to $t_i$ or $f_i$, respectively. We also add cross-edges $t_i \xrightarrow{1} f_i$ and $f_i \xrightarrow{k - 1} t_i$, and backtracking edges $v'_i \xrightarrow{k - 1} t_i$ and $v'_i \xrightarrow{1} f_i$. The variable gadgets enforce correct play by both players.

    \begin{figure}[htbp]
        \centering
        \begin{tikzpicture}[
        circ/.style={circle, draw, minimum size=7mm, inner sep=0pt},
        sqr/.style={rectangle, draw, minimum size=7mm, inner sep=0pt},
        arr/.style={->, >=stealth}, weight/.style={font=\small}
    ]
    \def\s{1}

    \node[circ] (vi) at (0, {1 * \s}) {$v_i$};
    \node[sqr] (ti) at ({2 * \s}, 0) {$t_i$};
    \node[sqr] (fi) at ({2 * \s}, {2 * \s}) {$f_i$};
    \node[circ] (vip) at ({4 * \s}, {1 * \s}) {$v'_i$};
    \node[circ] (next) at ({5 * \s}, {1 * \s}) {};

    \draw[arr] (vi) -- (ti);
    \draw[arr] (vi) -- (fi);
    \draw[arr] (ti) -- (vip);
    \draw[arr] (fi) -- (vip);
    \draw[arr] (vip) -- (next);

    \draw[arr] (ti) to[bend left=12] node[weight, left] {$1$} (fi);
    \draw[arr] (fi) to[bend left=12] node[weight, right] {$k - 1$} (ti);

    \draw[arr] (vip) to[bend left=35] node[weight, right] {$k - 1$} (ti);
    \draw[arr] (vip) to[bend right=35] node[weight, right] {$1$} (fi);
\end{tikzpicture}
        \caption{Mod-$k$ variable gadget; unlabelled edges are paths of length $k$.}
        \label{fig:mod-k-gadget}\label{fig:even-length-gadget}
    \end{figure}

    \begin{claim}\label{clm:mod-k-gadget}
        Any deviation from correct play in the assignment phase allows the opponent to win in both loop and lasso semantics.
    \end{claim}

    \begin{claimproof}
        \emph{Adam's deviations.}
        Suppose Adam takes a cross-edge. If he takes the opposite cross-edge next, he closes a cycle of length $1 + (k - 1) = k$. Otherwise, he moves to $v'_i$ and Eve returns to the originally chosen vertex, closing a cycle of length $1 + k + (k - 1) = 2k$.

        \emph{Eve's deviations.}
        Suppose the chosen vertex was $t_i$ and Adam moves to $v'_i$ according to correct play. If Eve returns to $t_i$, she closes a cycle of length $k + (k - 1) = 2k - 1$. If she moves to $f_i$, Adam moves back to $v'_i$, closing a cycle of length $k + 1$.

        Finally, every vertex reached during correct play is reached after a multiple of $k$ edges. In all the cases above, the repeated vertex is one of these vertices. Hence the head has length divisible by $k$, so the loop and the whole lasso have the same length modulo $k$.
    \end{claimproof}

    The verification phase begins at an Adam vertex $c$, from which he selects a clause $C_j$. All clause vertices belong to Eve, and each has outgoing edges to its literals: $x_i$ leads to $t_i$ and $\neg x_i$ leads to $f_i$. If Eve chooses a true literal, she returns to a visited vertex and closes a loop of length divisible by $k$. If she chooses a false literal, the corresponding vertex $d_i \in \{t_i, f_i\}$ was not visited; Adam takes the cross-edge to the visited opposite vertex $\bar{d}_i$, closing a loop whose length is congruent to $1$ or $k - 1$ modulo $k$. In either case, the repeated vertex was first reached after a multiple of $k$ edges, so the lasso has the same residue as the loop. We establish the correctness of this construction with the following claim.

    \begin{claim}
        Eve wins $\G_\Phi$ from $v_0$, in either semantics, if and only if $\Phi$ is true.
    \end{claim}

    \begin{claimproof}
        $\Leftarrow$
        Assume $\Phi$ is true. Eve follows her winning assignment strategy. If Adam deviates in a gadget, she wins by~\Cref{clm:mod-k-gadget}. Otherwise, every clause he chooses at $c$ contains a true literal. Eve selects it, returning to a visited vertex and ending with both loop and lasso lengths divisible by $k$.

        $\Rightarrow$
        Assume $\Phi$ is false. Adam follows his winning assignment strategy. If Eve deviates in a gadget, he wins by~\Cref{clm:mod-k-gadget}. Otherwise, he selects an unsatisfied clause at $c$. Every literal leads to an unvisited vertex $d_i$. Adam takes the cross-edge from $d_i$ to $\bar{d}_i$, and neither the loop nor the lasso has length divisible by $k$.
    \end{claimproof}
    Since $k$ is fixed, the construction has polynomial size. This concludes the hardness proof.
\end{proof}
\begin{figure}[htbp]
    \centering
    \begin{tikzpicture}[
        circ/.style={circle, draw, minimum size=7mm, inner sep=0pt},
        sqr/.style={rectangle, draw, minimum size=7mm, inner sep=0pt},
        arr/.style={->, >=stealth},
        weight/.style={font=\scriptsize, inner sep=1.2pt}
    ]
    \def\chainSpread{1.10}
    \def\cPosition{(7.59, 4.30)}
    \def\clauseSpread{3.00}
    \def\literalSpread{0.90}

    \node[circ] (v0) at (0, 0) {$v_0$};
    \node[circ] (v1) at (\chainSpread, 0) {$v_1$};

    \draw[arr] (v0) -- (v1);

    \node[sqr] (t1) at ({2 * \chainSpread}, 0.90) {$t_1$};
    \node[sqr] (f1) at ({2 * \chainSpread}, -0.90) {$f_1$};
    \node[circ] (a1) at ({3 * \chainSpread}, 0) {$v'_1$};
    \node[sqr] (v2) at ({4 * \chainSpread}, 0) {$v_2$};

    \draw[arr] (v1) -- (t1);
    \draw[arr] (v1) -- (f1);
    \draw[arr] (t1) -- (a1);
    \draw[arr] (f1) -- (a1);
    \draw[arr] (a1) -- (v2);

    \draw[arr] (t1) to[bend left=10] node[weight, right] {$1$} (f1);
    \draw[arr] (f1) to[bend left=10] node[weight, left, xshift=1pt] {$k - 1$} (t1);

    \draw[arr] (a1) to[bend right=35] node[weight, above right] {$k - 1$} (t1);
    \draw[arr] (a1) to[bend left=35] node[weight, below right] {$1$} (f1);

    \node[anchor=north, yshift=-3mm] at (v1.south) {$x_1$};

    \node[sqr] (t2) at ({5 * \chainSpread}, 0.90) {$t_2$};
    \node[sqr] (f2) at ({5 * \chainSpread}, -0.90) {$f_2$};
    \node[circ] (a2) at ({6 * \chainSpread}, 0) {$v'_2$};
    \node[circ] (v3) at ({7 * \chainSpread}, 0) {$v_3$};

    \draw[arr] (v2) -- (t2);
    \draw[arr] (v2) -- (f2);
    \draw[arr] (t2) -- (a2);
    \draw[arr] (f2) -- (a2);
    \draw[arr] (a2) -- (v3);

    \draw[arr] (t2) to[bend left=10] node[weight, right] {$1$} (f2);
    \draw[arr] (f2) to[bend left=10] node[weight, left, xshift=1pt] {$k - 1$} (t2);

    \draw[arr] (a2) to[bend right=35] node[weight, above right] {$k - 1$} (t2);
    \draw[arr] (a2) to[bend left=35] node[weight, below right] {$1$} (f2);

    \node[anchor=north, yshift=-3mm] at (v2.south) {$x_2$};

    \node[sqr] (t3) at ({8 * \chainSpread}, 0.90) {$t_3$};
    \node[sqr] (f3) at ({8 * \chainSpread}, -0.90) {$f_3$};
    \node[circ] (a3) at ({9 * \chainSpread}, 0) {$v'_3$};
    \node[sqr] (v4) at ({10 * \chainSpread}, 0) {$v_4$};

    \draw[arr] (v3) -- (t3);
    \draw[arr] (v3) -- (f3);
    \draw[arr] (t3) -- (a3);
    \draw[arr] (f3) -- (a3);
    \draw[arr] (a3) -- (v4);

    \draw[arr] (t3) to[bend left=10] node[weight, right] {$1$} (f3);
    \draw[arr] (f3) to[bend left=10] node[weight, left, xshift=1pt] {$k - 1$} (t3);

    \draw[arr] (a3) to[bend right=35] node[weight, above right] {$k - 1$} (t3);
    \draw[arr] (a3) to[bend left=35] node[weight, below right] {$1$} (f3);

    \node[anchor=north, yshift=-3mm] at (v3.south) {$x_3$};

    \node[circ] (vn) at ({11.8 * \chainSpread}, 0) {$v_n$};

    \draw[arr] (v4) -- node[midway, fill=white, inner sep=3pt] {$\cdots$} (vn.west);

    \node[sqr] (tn) at ({12.8 * \chainSpread}, 0.90) {$t_n$};
    \node[sqr] (fn) at ({12.8 * \chainSpread}, -0.90) {$f_n$};
    \node[circ] (ak) at ({13.8 * \chainSpread}, 0) {$v'_n$};

    \draw[arr] (vn) -- (tn);
    \draw[arr] (vn) -- (fn);
    \draw[arr] (tn) -- (ak);
    \draw[arr] (fn) -- (ak);

    \draw[arr] (tn) to[bend left=10] node[weight, right] {$1$} (fn);
    \draw[arr] (fn) to[bend left=10] node[weight, left, xshift=1pt] {$k - 1$} (tn);

    \draw[arr] (ak) to[bend right=35] node[weight, above right] {$k - 1$} (tn);
    \draw[arr] (ak) to[bend left=35] node[weight, below right] {$1$} (fn);

    \node[anchor=north, yshift=-3mm] at (vn.south) {$x_n$};

    \node[sqr] (C) at \cPosition {$c$};

    \node[circ] (C1) at ([xshift=-\clauseSpread cm, yshift=-1.10cm]C) {$C_1$};

    \node[circ] (litx1) at ([xshift=-\literalSpread cm, yshift=-1cm]C1) {$x_1$};
    \node[circ] (litx2) at ([yshift=-1cm]C1) {$x_2$};
    \node[circ] (litnx3) at ([xshift=\literalSpread cm, yshift=-1cm]C1) {$\bar{x}_3$};

    \draw[arr] (C) -- (C1);
    \draw[arr] (C1) -- (litx1);
    \draw[arr] (C1) -- (litx2);
    \draw[arr] (C1) -- (litnx3);

    \node[circ] (C2) at ([yshift=-1.10cm]C) {$C_2$};
    \node[circ] (litnx1) at ([xshift=-\literalSpread cm, yshift=-1cm]C2) {$\bar{x}_1$};
    \node[circ] (litnx2) at ([yshift=-1cm]C2) {$\bar{x}_2$};
    \node[circ] (litc23) at ([xshift=\literalSpread cm, yshift=-1cm]C2) {};

    \draw[arr] (C) -- (C2);
    \draw[arr] (C2) -- (litnx1);
    \draw[arr] (C2) -- (litnx2);
    \draw[arr] (C2) -- (litc23);

    \path (C2.center) -- ([xshift=\clauseSpread cm]C2.center) node[midway] {$\cdots$};

    \node[circ] (Cm) at ([xshift=\clauseSpread cm, yshift=-1.10cm]C) {$C_m$};

    \node[circ] (litcm1) at ([xshift=-\literalSpread cm, yshift=-1cm]Cm) {};
    \node[circ] (litcm2) at ([yshift=-1cm]Cm) {};
    \node[circ] (litcm3) at ([xshift=\literalSpread cm, yshift=-1cm]Cm) {};

    \draw[arr] (C) -- (Cm);
    \draw[arr] (Cm) -- (litcm1);
    \draw[arr] (Cm) -- (litcm2);
    \draw[arr] (Cm) -- (litcm3);

    \draw[arr] (litx1) -- (t1);
    \draw[arr] (litx2) -- (t2);
    \draw[arr] (litnx3) -- (f3);

    \draw[arr] (ak) to[out=90, in=0] (C);
\end{tikzpicture}
    \caption{The mod-$k$ game $\G_\Phi$; unlabelled edges represent paths of length $k$.}
    \label{fig:mod-k-hardness}\label{fig:even-hardness}
\end{figure}

\section{Memory in Lasso Games}\label{sec:memory}
Another way to analyse the complexity of a game is to study the amount of memory needed to play optimally (cf.~\cite{NathBook}). A memory structure is a triple $\mathcal{M} = (M, m_0, \mu)$, where $M$ is a finite set of memory states, $m_0 \in M$ is the initial memory state, and $\mu : E \times M \to M$ is the update function that specifies how the memory changes after a move. We extend the update function to finite sequences of edges $\mu^* : E^* \to M$ as $\mu^*(\epsilon) = m_0$ and $\mu^*(w \cdot e) = \mu(e, \mu^*(w))$. A strategy for Eve using memory $\mathcal{M}$ is a function
\[
    \hat{\sigma} : \Ve \times M \to E.
\]
For every finite play $\play$ ending at vertex $v$, write $\mu^*(\play)$ for the value of $\mu^*$ on the edge sequence of $\play$. The strategy $\hat{\sigma}$ and the memory structure $\mathcal{M}$ then induce the strategy $\sigma(\play) = \hat{\sigma}(v, \mu^*(\play))$. By an abuse of notation, we identify $\hat{\sigma}$ with $\sigma$.

Throughout this section, let $n$ denote the number of vertices in the game arena. For first-cycle games in lasso semantics, memory of size $\mathcal{O}((|\Colors| + 1)^n)$ is always sufficient because one can remember the entire play history. We study when this upper bound can be improved.

\begin{proposition}\label{prop:general-memory-upperbound}
    Let $\Pay : \Colors^* \to \mathbb{R} \cup \{\pm \infty\}$ be a payoff function such that every first-cycle game with payoff $\Pay$ in loop semantics is positionally determined. Then, in every lasso game with payoff $\Pay$ on an arena with $n$ vertices, both players have optimal strategies using at most $2^n$ memory states.
\end{proposition}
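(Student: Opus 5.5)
The plan is to use as memory the set of vertices visited so far. Take $M = 2^V$ with $m_0 = \{v_0\}$ and $\mu((u,c,w), S) = S \cup \{w\}$. This gives exactly $2^n$ states. The task is then to show that each player has an optimal strategy $\hat\sigma : V_p \times 2^V \to E$, i.e.\ one whose choice depends only on the current vertex $v$ and the visited set $S$.

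\textbf{Step 1: existence and normal form.} Since the game tree has depth at most $n+1$ (as in the proof of \Cref{prop:lasso-complexity}), optimal strategies exist by backward induction, and every play of the game is a simple path followed by one closing edge. I will fix an optimal strategy $\sigma$ for Eve; the argument for Adam is symmetric. I then consider two histories $h_1, h_2$ consistent with $\sigma$ that end at the same vertex $v$ with the same visited set $S$. The continuation from either history is a first-cycle game on the subarena induced by $(V\setminus S)\cup\{v\}$, which ends once some vertex of $S$ or some newly visited vertex is hit. So the two continuation games have the same arena and the same stopping rule. They differ only in the payoff. The final word is $w_{h_i}\cdot x$, where $w_{h_i}$ is the label sequence of $h_i$ and $x$ is the label sequence of the continuation.

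\textbf{Step 2: exchange argument.} I want to show that Eve can replace the continuation she plays after $h_2$ by the one she plays after $h_1$, or vice versa, without losing optimality at the root. Iterating this over the finitely many pairs $(S,v)$, processed in order of decreasing $|S|$ so that later replacements do not undo earlier ones, yields a strategy that factors through $(S,v)$. To compare the two choices I will encode both histories into a single auxiliary arena. This arena is built from the continuation arena by adding, for each $i$, a gadget that reproduces the prefix $h_i$, together with a choice vertex at the start. In the lasso game on this arena, the relevant plays correspond to lassos of the original game. I then want to turn this lasso game into a \emph{loop}-semantics game, by closing the path back to a vertex of the prefix gadget so that the whole lasso becomes the loop. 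This is where loop-positional determinacy enters: it gives a single positional choice at $v$ that is optimal against both prefixes simultaneously.

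\textbf{Main obstacle.} Step 2 is where I expect the difficulty. The problem is that closing a loop at $u \in S$ in the original game makes the loop start at $u$, and its labels depend on where $u$ lies inside each $h_i$. A naive ``the whole lasso becomes the loop'' encoding therefore has to be arranged carefully so that loop semantics in the auxiliary arena reproduces exactly $\Pay$ of the full lasso. My first attempt will be to make every return edge into the copy of $S$ route to a fresh copy of $u$ together with the corresponding suffix of $h_i$, and to make the prefix gadget start at a single shared vertex. If that fails, the fallback is an induction on $|V\setminus S|$, applying loop-positional determinacy only at the level of the one-step choices at $v$.
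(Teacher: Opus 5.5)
There is a genuine gap in your proposal. Your memory structure (visited set, $m_0=\{v_0\}$, $\mu((u,c,w),S)=S\cup\{w\}$) is the right one. However, the step that uses loop-positional determinacy is never carried out, and it fails at exactly the point you flag: you never build an auxiliary loop-semantics game whose loops are full lassos. Without that, the hypothesis gives you nothing.

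\textbf{Why your first attempt fails.} Routing a return edge at $u\in S$ to a fresh copy of $u$, followed by the suffix of $h_i$ back to $v$, makes the first repetition occur at $v$. The loop is then the continuation followed by that suffix. This is a rotation of the original loop, with the part of $h_i$ before $u$ dropped, so you recover loop-semantics payoffs rather than $\Pay$ of the whole lasso.

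\textbf{Other problems with Step 2.}
\begin{itemize}
\item You only treat returns into $S$. A continuation can also close its cycle on a vertex it visited itself, and such a cycle would still close locally in your auxiliary arena.
\item The fallback (induction with determinacy applied to one-step choices) is too vague to check. Positional determinacy is a statement about whole arenas, and the point is that one choice at $(v,S)$ must work against all prefixes at once. That requires a joint arena of the right shape.
\item The exchange claim (the continuation after $h_1$ also works after $h_2$, or vice versa) is asserted, not derived. A correctly built joint arena would yield a common choice at $(v,S)$. Making such replacements consistent across overlapping subtrees also needs all deeper states to be uniform already, which you only gesture at with the decreasing-$|S|$ ordering.
\end{itemize}

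\textbf{The missing idea.} It is simple and makes the exchange argument unnecessary. Work in the product arena with vertices $(v,U)$, where $v\in U\subseteq V$:
\begin{itemize}
\item an edge $(v,c,u)$ leads to $(u,U\cup\{u\})$ if $u\notin U$;
\item it is redirected to $(v_0,\{v_0\})$ if $u\in U$.
\end{itemize}
The second component strictly grows until a redirected edge is taken. Hence the first cycle in this arena is the entire play, from $(v_0,\{v_0\})$ back to itself, and its label sequence is exactly the lasso's. So the loop-semantics game on the product arena has the same value as the lasso game, and strategies translate in both directions preserving payoffs. An optimal positional strategy there, which exists by hypothesis, is literally a strategy in the original game depending only on $(v,U)$, i.e., one using $2^n$ memory states. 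This is what your ``single shared start vertex'' was reaching for. The construction should redirect every closing edge to that start vertex and be applied once, globally, rather than pair by pair.
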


\begin{proof}
    The idea is to reduce the game with payoff $\Pay$ in lasso semantics on an arena $\A$ to a game with the same payoff in loop semantics on a larger arena $\A'$. We construct $\A'$ as the product of $\A$ with the power set $\wp(V)$, thereby tracking the set of vertices visited so far. When a vertex would be repeated, the corresponding edge is redirected to the initial vertex $(v_0, \{v_0\})$ rather than closing a cycle locally. Every cycle in $\A'$ therefore passes through $(v_0, \{v_0\})$, yielding a payoff correspondence between lasso plays in $\A$ and cycles in $\A'$.

    Let $\G = (\A, \Pay)$ be a lasso-semantics game, where $\A = (V, E, \Ve, \Va, v_0, \Colors)$ and $|V| = n$. We assume that games with payoff $\Pay$ admit optimal positional strategies for both players in loop semantics. We define the expanded arena $\A' = (V', E', \Ve', \Va', v'_0, \Colors)$ as follows:
    \begin{itemize}
        \item The set of vertices is $V' = \{(v, U) \mid v \in V, \; U \subseteq V, \; v \in U\}$.
        \item The ownership of vertices is inherited from $\A$: $\Ve' = \{(v, U) \in V' \mid v \in \Ve\}$, and $\Va' = \{(v, U) \in V' \mid v \in \Va\}$.
        \item The initial vertex is $v'_0 = (v_0, \{v_0\})$.
        \item For each edge $(v, c, u) \in E$ and each vertex $(v, U) \in V'$, we add an edge to $E'$ as follows:
              \begin{itemize}
                  \item If $u \notin U$, we add $((v, U), c, (u, U \cup \{u\}))$.
                  \item If $u \in U$, we add a \emph{closing} edge $((v, U), c, (v_0, \{v_0\}))$.
              \end{itemize}
    \end{itemize}
    Since $\A$ has no terminal vertices, neither does $\A'$. Note that $|V'| \leq n \cdot 2^n$. Let $\G' = (\A', \Pay)$ be the corresponding game evaluated in loop semantics. Observe that in any play on arena $\A'$, the second coordinate grows until a closing edge is traversed. Thus, the first cycle in $\A'$ is always formed by returning to the initial vertex $(v_0, \{v_0\})$.

    \begin{claim}
        The values of the two games are equal:
        \[
            \val_{\G}(v_0) = \val_{\G'}(v'_0).
        \]
    \end{claim}

    \begin{claimproof}
        We first describe the correspondence between plays in the two games. Consider a nonterminal history $v_0, c_0, v_1, \ldots, c_{k - 1}, v_k$ in $\A$. Since the play has not yet ended, the vertices $v_0, \ldots, v_k$ are pairwise distinct. Let $U_i = \{v_0, \ldots, v_i\}$. The corresponding history in $\A'$ is $(v_0, U_0), c_0, (v_1, U_1), \ldots, c_{k - 1}$, $(v_k, U_k)$. If the next edge in $\A$ is $(v_k, c, u)$, then the corresponding edge in $\A'$ is $((v_k, U_k), c, (u, U_k \cup \{u\}))$ if $u \notin U_k$ and $((v_k, U_k), c, (v_0, \{v_0\}))$ if $u\in U_k$.

        Thus, strategies for either player can be translated between $\G$ and $\G'$ so that corresponding strategy profiles produce plays with the same sequence of labels. Moreover, before a closing edge is taken in $\A'$, the second component strictly grows. Hence, no vertex of $\A'$ is repeated before the closing edge returns to the initial vertex $(v_0, \{v_0\})$. Therefore, the sequence of labels evaluated in loop semantics in $\G'$ is exactly the sequence evaluated in lasso semantics in $\G$, and corresponding strategy profiles have the same payoff.

        Since strategies can be translated in both directions while preserving the payoff of every resulting play, taking the maximum over Eve's strategies and the minimum over Adam's strategies yields
        \[
            \val_{\G}(v_0) = \val_{\G'}(v'_0).
        \]
    \end{claimproof}

    By assumption, both players have optimal positional strategies in the loop-semantics game $\G'$. Each such positional strategy induces a memory-based strategy in the original game $\G$ using a finite memory structure $\mathcal{M} = (M, m_0, \mu)$ with $M = \wp(V)$, initial state $m_0 = \{v_0\}$, and update function $\mu((u, c, v), U) = U \cup \{v\}$. The next edge is obtained by taking in $\A$ an edge corresponding to the edge prescribed in $\A'$ at state $(v, U)$. Since the games have the same value and the positional strategies in $\G'$ are optimal, the induced strategies in $\G$ are also optimal. The memory structure has size $|M| = 2^n$.
\end{proof}

\subsection{Mean-payoff games}

\begin{restatable}{proposition}{propMPMemoryUpper}\label{prop:lasso-MP-memory-upper}
    In any lasso mean-payoff game played on an arena with $n$ vertices, a memory structure of size $2^n$ is sufficient for both players to play optimally.
\end{restatable}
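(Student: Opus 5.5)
This follows from \Cref{prop:general-memory-upperbound} applied with $\Pay = \MP$. The only hypothesis to check is that every first-cycle mean-payoff game in loop semantics is positionally determined. Once that is established, the product construction with $\wp(V)$ gives optimal strategies for both players using $2^n$ memory states.

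To verify the hypothesis, I would rely on the equivalence noted in \Cref{sec:MP} between first-cycle mean-payoff games in loop semantics and classical infinite-duration mean-payoff games. That equivalence is due to Ehrenfeucht and Mycielski, and was also used by Aminof and Rubin. I would recall the standard argument as follows.

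\begin{itemize}
\item Classical mean-payoff games are positionally determined with some value $\nu(v)$ from each vertex.
\item Suppose Eve fixes an optimal positional strategy $\sigma$ of the infinite game. Every simple cycle reachable in the one-player restriction $\A_\sigma$ has mean at least $\nu(v_0)$. Otherwise Adam could repeat that cycle forever and push the infinite mean below $\nu(v_0)$.
\item The loop of any first-cycle play consistent with $\sigma$ is such a simple cycle. So in loop semantics $\sigma$ guarantees at least $\nu(v_0)$.
\item Symmetrically, Adam's optimal positional strategy guarantees at most $\nu(v_0)$.
\item Hence the loop game is determined with value $\nu(v_0)$, and both players have optimal positional strategies.
\end{itemize}

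The argument must hold from every initial vertex, including arenas such as the product arena $\A'$. The reasoning above is uniform in the arena and the initial vertex, so this is fine.

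The main subtlety is matching notions of optimality. \Cref{prop:general-memory-upperbound} needs positional optimal strategies in the loop game $\G'$ from $v'_0$, not in the infinite game. The step that needs care is the cycle argument: every reachable simple cycle under $\sigma$ has mean at least the value. I would justify it by the standard fact that in a one-player mean-payoff game the infimum over plays equals the minimum mean over reachable simple cycles.

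With the hypothesis verified, \Cref{prop:general-memory-upperbound} gives the memory structure $(\wp(V), \{v_0\}, U \mapsto U \cup \{v\})$ of size $2^n$ for both players, which proves the proposition.
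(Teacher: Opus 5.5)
Your proposal is correct and follows the paper's proof: it applies \Cref{prop:general-memory-upperbound} to the mean-payoff function, using Ehrenfeucht and Mycielski's positional determinacy of mean-payoff games in loop semantics. Your extra sketch transferring positional optimality from the infinite game to the first-cycle loop game is sound. Only the easy direction is needed: a reachable simple cycle of smaller mean could be repeated forever. That sketch just spells out what the paper takes from the citation.
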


\begin{proof}
    As established by Ehrenfeucht and Mycielski~\cite{Ehrenfeucht1979Positional}, mean-payoff games are positionally determined in loop semantics. Consequently, the $2^n$ memory upper bound follows from~\Cref{prop:general-memory-upperbound}.
\end{proof}

\begin{restatable}{proposition}{propMPMemoryLower}\label{prop:lasso-MP-memory-lower}
    There exists a family of lasso mean-payoff games in which Eve requires exponential memory to play optimally.
\end{restatable}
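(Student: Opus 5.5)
We build, for each $m$, an arena $\G_m$ with $O(m)$ vertices. Eve must remember which of $2^m$ histories occurred, so any optimal strategy needs at least $2^m$ memory states. The template is the standard ``Adam chooses, Eve later matches'' construction, with the matching enforced by the lasso sum.

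\textbf{Construction.} $\G_m$ has two phases. In the first phase Adam passes through $m$ gadgets. At gadget $i$ he moves from $a_i$ to $p_i$ or to $q_i$, and both lead to $a_{i+1}$. The two choices carry weights $+w_i$ and $-w_i$, compensated on the outgoing edge, so every choice contributes total weight $0$; this mirrors the variable gadgets of \Cref{thm:lasso-MP-complexity}. In the second phase Eve passes through $m$ gadgets. At gadget $i$ she moves from $e_i$ to $e_{i+1}$ by one of two edges:
\begin{itemize}
  \item a ``safe'' edge of weight $0$;
  \item an edge into the vertex $p_i$ or $q_i$ that Adam \emph{did not} visit, arranged so that it is always available.
\end{itemize}
The intended mechanism is that Eve should visit exactly the vertices Adam avoided. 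The play then ends by a final edge back into an already visited vertex, and only that final edge has a negative weight, say $-1$. Eve's ``matching'' edges carry a small bonus $+\varepsilon$. If she instead enters the vertex Adam visited, the play closes immediately at a suboptimal sum. Choose weights so that:
\begin{itemize}
  \item a play with perfect complementary matching has sum $m\varepsilon - 1 + \delta \ge 0$;
  \item every imperfect play, whether it stops early, deviates, or mismatches, has sum strictly below what perfect play guarantees.
\end{itemize}
Then the value is exactly the perfect-play payoff, and it is attained only by matching.

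\textbf{Correctness.} Optimality means Eve's strategy achieves the value against every Adam strategy. Against Adam's choice $s\in\{0,1\}^m$, the only optimal continuation is the complementary sequence $\bar s$. Stopping early or closing a cycle too soon yields a strictly smaller mean, and this is checked by a per-gadget case analysis similar to \Cref{clm:mod-k-gadget}. Note that since the mean-payoff divides by length, all optimal plays should have the same length (for example $2m + O(1)$) so that comparisons reduce to comparing sums. I will arrange this by making every branch in a gadget have equal length.

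\textbf{Memory lower bound.} Suppose Eve's optimal strategy uses a memory structure with fewer than $2^m$ states. Then two distinct Adam histories $s \neq s'$ reach $e_1$ with the same memory state. From $e_1$ onward Eve's choices depend only on the vertex and the memory, because all vertices from $e_1$ on are hers. Hence she produces the same phase-2 sequence after $s$ and after $s'$, and it cannot equal both $\bar s$ and $\bar{s'}$. One of the two plays is therefore suboptimal, a contradiction. This gives the bound $2^m = 2^{\Omega(n)}$.

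\textbf{Main obstacle.} The main obstacle is the weight calibration. Entering an Adam-visited vertex must be strictly worse, even though it ends the play early and could exploit the averaging; the $-1$ already accumulated should help here. Mismatching at one gadget must not be compensable later. If the averaging causes trouble, I will first try making each gadget's reward depend on visiting the unvisited vertex, for example an edge of weight $+1$ that only exists beyond it. Every deviation must leave Eve strictly below the threshold, while perfect play gives exactly the value $0$.
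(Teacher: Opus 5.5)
\textbf{There is a genuine gap.} Your pigeonhole argument on the memory state at the start of Eve's phase matches the paper's. However, the arena it is applied to is never actually constructed, and the sketch does not work as written.

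First, the wiring. When Eve enters the branch vertex $p_i$ that Adam avoided, its only outgoing edge goes to $a_{i+1}$, which was visited in phase~1. So the play closes immediately, and she can never reach $e_{i+1}$ to match the later gadgets. Adding an edge $p_i \to e_{i+1}$ does not repair this cheaply. It gives the owner of $p_i$ a shortcut out of phase~1, after which all later branch vertices are unvisited and matching them is trivial. If Adam owns $p_i$, he may instead close through $a_{i+1}$ whenever that is bad for Eve; with your $\pm w_i$ weights it is bad for one of the two branches.

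Second, the weights. The edge $e_i \to p_i$ is the same edge whether or not $p_i$ was visited, so it carries the bonus $+\varepsilon$ also when it closes the play. With the only negative weight on the final edge, Eve can enter Adam's visited vertex at gadget~$1$ and stop with sum $\varepsilon > 0$. That beats perfect play under your own target that perfect play is worth exactly $0$. Your remark that ``the $-1$ already accumulated should help'' contradicts placing the $-1$ last. So the deferred ``weight calibration'' is not a detail; it is the content of the proof.

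The missing idea is to test Eve on a \emph{single} coordinate chosen by Adam, and to make the correct answer the move that \emph{closes} the play, instead of having her tour all complementary vertices. In the paper, Adam's gadgets are $v_i \xrightarrow{1} v_{i,d} \xrightarrow{-1} v_{i+1}$ for $d \in \{L, R\}$, with the last gadget leading to a hub $q$. Thus the running sum is $0$ at every $v_j$ and at $q$. From $q$, Adam picks an index $i$, and Eve must move with weight $0$ to $v_{i,L}$ or $v_{i,R}$. Entering the visited one closes the lasso with sum $0$. Entering the unvisited one forces its unique outgoing edge, of weight $-1$, into the already visited $v_{i+1}$ (or $q$), giving sum $-1$. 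Because the value is $0$, only the sign of the sum matters, so the averaging and length issues disappear and there are no deviations to analyse. Your pigeonhole argument at $q$, with Adam querying an index where two colliding assignments differ, then completes the proof with $r$ gadgets and $n = 4r + 1$ vertices.
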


\begin{proof}
    For every $r \geq 1$, we construct a lasso mean-payoff game $\G_r$ with $n = 4r + 1$ vertices in which every optimal strategy for Eve requires at least $2^r$ memory states. The game $\G_r$ consists of $r$ connected gadgets corresponding to the variables $v_1, \dots, v_r$. As before, the game has two phases. In the \emph{assignment phase}, Adam assigns a value ($L$ or $R$) to each variable. In the subsequent \emph{query phase}, Eve is tested on one of Adam's choices. The game $\G_3$ is shown in~\Cref{fig:MP-memory-lower}.

    \begin{figure}[hbt]
        \centering
        \begin{tikzpicture}[
        circ/.style={circle, draw, minimum size=7mm, inner sep=0pt, fill=white},
        sqr/.style={rectangle, draw, minimum size=7mm, inner sep=0pt, fill=white},
        arr/.style={->, >=stealth, thick},
        line/.style={thick}, weight/.style={font=\scriptsize}
    ]
    \def\hsep{3.2}

    \foreach \i [count=\c from 0] in {1, 2, 3} {
            \begin{scope}[xshift=\c * \hsep cm]
                \node[sqr] (v\i) at (0, 0) {$v_{\i}$};
                \node[sqr] (l\i) at (1.6, -1) {$v_{\i, L}$};
                \node[sqr] (r\i) at (1.6, 1) {$v_{\i, R}$};
                \node[circ] (q\i) at (0, 2) {$q_{\i}$};
                \draw[arr] (v\i) -- node[weight, below left] {$1$} (l\i);
                \draw[arr] (v\i) -- node[weight, above left] {$1$} (r\i);
                \draw[arr] (q\i) -- (l\i);
                \draw[arr] (q\i) -- (r\i);
            \end{scope}
        }

    \path (v3) ++(3.2, 0) node[sqr] (c) {$q$};

    \draw[arr] (l1) -- node[weight, below] {$-1$} (v2);
    \draw[arr] (r1) -- node[weight, above] {$-1$} (v2);
    \draw[arr] (l2) -- node[weight, below] {$-1$} (v3);
    \draw[arr] (r2) -- node[weight, above] {$-1$} (v3);
    \draw[arr] (l3) -- node[weight, below] {$-1$} (c);
    \draw[arr] (r3) -- node[weight, above] {$-1$} (c);

    \coordinate (busTop) at (0, 2.8);
    \draw[line] (c.north) -- (c |- busTop) -- (q1 |- busTop);
    \foreach \i in {1, 2, 3} {\draw[arr] (q\i |- busTop) -- (q\i);}
\end{tikzpicture}
        \caption{The game $\G_3$, with initial vertex $v_1$. Unlabelled edges have weight $0$.}
        \label{fig:MP-memory-lower}
    \end{figure}

    Eve has an optimal strategy $\strE$ using $2^r$ memory states. She simply keeps track of Adam's choices in every gadget. In the query phase, Adam asks Eve about the value assigned to some variable $v_i$. Since she remembers the entire assignment, she takes the outgoing edge to $v_{i, d}$, where $d \in \{L, R\}$ is the value selected by Adam during the assignment phase. The play ends with mean payoff $0$.

    Now suppose that Eve follows a strategy $\strE$ using fewer than $2^r$ memory states. Then two distinct plays $\play$ and $\play'$, reach vertex $q$ in the same memory state. Since the memory state and the current vertex determine Eve's next edge, both plays continue in the same manner.

    Consequently, Adam can choose an index $i$, $1 \leq i \leq r$, on which the two assignments differ: in one play he chose $v_{i, L}$, and in the other he chose $v_{i, R}$. Eve chooses the same outgoing edge in both plays, so in one of them she chooses an edge leading to an unvisited vertex. The payoff of at least one of the plays $\play$ or $\play'$ is therefore less than $0$, so $\strE$ is not optimal.
\end{proof}

\subsection{Parity games}

By~\Cref{sec:Parity}, lasso parity games are positionally determined.

\subsection{Mod-\texorpdfstring{$k$}{k} games}

\begin{proposition}\label{prop:mod-memory-upper}\label{prop:even-memory-upper}
    For every $k \geq 2$, in mod-$k$ games in both loop and lasso semantics, memory of size $(k + 1)^n$ is sufficient for both players to play optimally.
\end{proposition}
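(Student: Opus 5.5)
We will use memory that records, for each vertex, whether it has been visited yet and, if so, the residue modulo $k$ of the number of edges taken before it was first reached. The memory set is
\[
M = \{ f : V \to \{\bot, 0, 1, \dots, k-1\} \},
\]
so $|M| = (k+1)^n$. The initial state is $f_0$ with $f_0(v_0) = 0$ and $f_0(u) = \bot$ for $u \neq v_0$. The update depends on the current length residue, which we recover from $f$: it is $f(v)$, where $v$ is the current vertex. Since the play has not yet ended, $v$ was first reached at the current time. Thus, on an edge $(v, c, u)$ with $u$ unvisited, we set $f(u) = f(v) + 1 \bmod k$. Edges that close a cycle end the play, so no further update is needed.

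The plan mirrors the proof of \Cref{prop:general-memory-upperbound}, but replaces the power set by this residue-annotated structure. We build a product arena $\A'$ with vertices $(v, f)$ such that $f(v) \neq \bot$. Ownership is inherited from $\A$, and the initial vertex is $(v_0, f_0)$. An edge $(v,c,u)$ leads to $(u, f[u \mapsto f(v)+1])$ if $f(u) = \bot$. Otherwise it is redirected to a \emph{sink}, chosen according to the outcome.
\begin{itemize}
\item In lasso semantics the outcome is the total length $f(v) + 1 \bmod k$.
\item In loop semantics the outcome is the loop length $f(v) + 1 - f(u) \bmod k$.
\end{itemize}
Both are determined by $f$, $v$ and $u$, which is the key observation. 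Every maximal play of the original game therefore ends at a point where the winner is a function of the current product vertex and the edge taken.

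We then view the product as a reachability game: Eve wins if and only if the play takes a winning closing edge. Since each play in the product has at most $n$ steps before closing, this is a finite game tree folded into a DAG-like arena, because the support of $f$ strictly grows. Reachability games on such arenas are positionally determined via attractors. Alternatively, we can argue directly: by backward induction on the size of the support of $f$, the winner from $(v, f)$ is well defined, and the choice at $(v, f)$ can be fixed independently of the history. As in the claim inside \Cref{prop:general-memory-upperbound}, plays and strategies translate bijectively between $\G$ and the product. A positional winning strategy on $\A'$ then yields a strategy in $\G$ with memory $M$, and the same holds for Adam.

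The main obstacle is making sure that the information stored genuinely suffices in loop semantics. The winner there depends on the loop length, which requires the time at which the repeated vertex $u$ was first visited. This is exactly why we store per-vertex residues rather than a single counter. Once this is checked, the remaining steps are routine, and we must also confirm that the current residue is read off as $f(v)$ rather than stored separately. The latter would give $k(k+1)^n$ states, so this step is needed to meet the bound.
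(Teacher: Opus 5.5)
Your proof is correct and essentially matches the paper's: your map $f : V \to \{\bot, 0, \dots, k-1\}$ is the paper's tuple of disjoint sets $(U_0, \ldots, U_{k-1})$ (with $f(u) = r$ iff $u \in U_r$), and your backward induction on the support of $f$, or equivalently the attractor argument on the product arena, is the paper's recursive $\Val$ with value-preserving choices. The only cosmetic difference is that you record the current vertex's residue on arrival and read the current length as $f(v)$, whereas the paper adds $v$ to $U_{|U| \bmod k}$ when leaving it and reads the current length as $|U|$.
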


\begin{proof}
    We first define the function $\Val$ and the strategies induced by it, and then prove their optimality by induction. A state of the game can be represented as a tuple $(v, U_0, \ldots, U_{k - 1})$, where $v \in V$ is the current vertex and the sets $U_0, \ldots, U_{k - 1} \subseteq V$ are pairwise disjoint. The set $U_r$ contains the vertices visited before the current occurrence of $v$ at positions congruent to $r$ modulo $k$. Let $U = \bigcup_{r = 0}^{k - 1} U_r$. Before the first repetition, all visited vertices are distinct, so $|U|$ is exactly the number of edges traversed so far. We let $\Val(v, U_0, \ldots, U_{k - 1})$ be true if Eve can force a win from this state, and false otherwise.

    When $v \in U$, the play has just ended. We define the terminal value $T(v, U_0, \ldots, U_{k - 1})$ as follows. In lasso semantics, $T(v, U_0, \ldots, U_{k - 1}) = \bigl(|U| \equiv 0 \pmod{k}\bigr)$. In loop semantics, let $r$ be the unique residue such that $v \in U_r$. The previous occurrence of $v$ was at a position congruent to $r$ modulo $k$, so the loop has length congruent to $|U| - r$ modulo $k$. Hence, $T(v, U_0, \ldots, U_{k - 1}) = \bigl(|U| - r \equiv 0 \pmod{k}\bigr)$. For either semantics, we define $\Val$ recursively as
    \[
        \Val(v, U_0, \ldots, U_{k - 1}) =
        \begin{cases}
            T(v, U_0, \ldots, U_{k - 1}),                                                  & v \in U,                     \\[2mm]

            \displaystyle\bigvee_{(v, c, u) \in E(v)} \Val(u, U_0', \ldots, U_{k - 1}'),   & v \notin U, \quad v \in \Ve, \\[3mm]

            \displaystyle\bigwedge_{(v, c, u) \in E(v)} \Val(u, U_0', \ldots, U_{k - 1}'), & v \notin U, \quad v \in \Va,
        \end{cases}
    \]
    where $(U_0', \ldots, U_{k - 1}')$ is obtained by adding $v$ to
    $U_{|U| \bmod k}$.

    Let the memory structure consist of all tuples $(U_0, \ldots, U_{k - 1})$, with all sets initially empty. After traversing an edge $(v, c, u)$, the memory is updated by adding $v$ to $U_{|U| \bmod k}$. At an Eve state with value true, Eve chooses an edge to a state that also evaluates to true. At an Adam state with value false, Adam chooses an outgoing edge to a state that also evaluates to false. In all other states, the choice can be arbitrary.

    We prove that these strategies are winning by induction on the number of vertices not in $U$. If $v \in U$, the play has ended and the terminal value agrees with the corresponding mod-$k$ objective. Otherwise, every edge leads to a state with one additional vertex in $U$, so the induction hypothesis applies. At an Eve vertex, a true disjunction provides an edge to a state from which Eve wins, while a false disjunction means that every outgoing edge is losing for her. At an Adam vertex, a true conjunction means that every outgoing edge is winning for Eve, while a false conjunction provides an edge to a state from which Adam wins. Thus, $\Val$ correctly describes the winner from every state, and the strategies defined above are winning whenever the corresponding player can win.

    Finally, each vertex is either unvisited or belongs to exactly one of the $k$ sets $U_0, \ldots, U_{k - 1}$. Hence $(k + 1)^n$ memory states are sufficient.
\end{proof}

\begin{restatable}{proposition}{propEvenMemoryLower}\label{prop:mod-memory-lower}\label{prop:even-memory-lower}\label{prop:lasso-mod-memory-lower}
    For every fixed $k \geq 2$, there exists a family of mod-$k$ games, in both loop and lasso semantics, with $n$ vertices in which Eve requires $2^{\Omega(n)}$ memory states to win.
\end{restatable}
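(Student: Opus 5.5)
I would adapt the memory lower bound of \Cref{prop:lasso-MP-memory-lower}, building the gadgets from pieces of the mod-$k$ variable gadget of \Cref{prop:mod-k-complexity}. The game $\G_r$ again has an \emph{assignment phase} followed by a \emph{query phase}. In the assignment phase Adam chooses, in each of $r$ gadgets, between two vertices $t_i$ and $f_i$. In the query phase Adam picks an index $i$, and Eve must answer by moving to the vertex Adam actually visited. Choosing correctly closes a loop of length divisible by $k$, and choosing wrongly lets Adam close a loop of bad residue. Every edge labelled $\ell$ is a path of length $\ell$, so the number of vertices is $O(kr)$. Since $k$ is fixed, a bound of $2^r$ memory states is then $2^{\Omega(n)}$.

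\textbf{Construction.} Gadget $i$ starts at an Adam vertex $v_i$ with edges of length $k$ to Adam vertices $t_i$ and $f_i$. From each of these a path of length $k$ leads to a vertex $v'_i$. I would give $v'_i$ to Adam and connect it by a path of length $k$ to $v_{i+1}$, the last gadget leading to an Adam vertex $q$. I also keep the cross-edges $t_i \xrightarrow{1} f_i$ and $f_i \xrightarrow{k-1} t_i$. In the query phase, $q$ has, for each $i$, a path to an Eve vertex $q_i$, and $q_i$ has two edges of length chosen so that the position at $q$ plus the edge length is a multiple of $k$ modulo $k$ shifts appropriately, leading to $t_i$ and $f_i$.

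The bookkeeping invariant, as in \Cref{clm:mod-k-gadget}, is that every vertex on the main path is reached at a position divisible by $k$. I would set the query path lengths so that returning to a visited $t_i$ or $f_i$ closes a cycle of length $\equiv 0 \pmod k$. Then the head length is divisible by $k$, and the loop and the lasso have the same residue. This makes the argument work uniformly for both semantics.

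\textbf{Correctness.} I would check three things.
\begin{enumerate}
\item \emph{Eve wins with $2^r$ memory.} She remembers Adam's choices and answers correctly, closing a loop of length $\equiv 0$.
\item \emph{Adam's deviations do not hurt Eve.} If Adam uses cross-edges during the assignment phase, either he closes a cycle of length $k$ himself, or the play continues. In the latter case both $t_i$ and $f_i$ are visited, so any answer from $q_i$ hits a visited vertex at a position $\equiv 0$. I have to verify the cycle lengths here, possibly adjusting edge lengths. A safer alternative is to remove the cross-edges from the assignment phase and make them reachable only from unvisited answer vertices, by giving Adam after a wrong answer a dedicated edge of length $1$ into the opposite vertex.
\item \emph{Eve with fewer than $2^r$ states loses.} As in \Cref{prop:lasso-MP-memory-lower}, two different assignments reach $q$ in the same memory state. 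Adam queries an index where they differ, so in one of the plays Eve enters an unvisited vertex $d_i$. Adam then takes the edge of length $1$ to the visited opposite vertex, closing a loop of length $\equiv 1 \pmod k$, which is nonzero since $k \geq 2$.
\end{enumerate}

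\textbf{Main obstacle.} The hard step is the residue bookkeeping. A wrong answer must give Adam a closing move of bad residue. At the same time, the extra edges must not give Adam shortcuts during the assignment phase, or let him close a bad cycle through them even after a correct answer. I would first fix all main-path positions to be multiples of $k$. I would then place Adam's punishing edge only at the literal-side vertices $d_i$, meaning the end of Eve's answer path, rather than at $t_i$ and $f_i$ themselves. Those vertices are unvisited exactly when the answer is wrong, so the punishing edge cannot be exploited earlier.
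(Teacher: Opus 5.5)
There is a genuine gap, and it sits exactly where you wrote ``I have to verify the cycle lengths here'': that verification fails. In your gadget, $v'_i$ belongs to Adam and has no backtracking edges, while the Adam vertices $t_i, f_i$ keep the cross-edges. Adam can therefore take a cross-edge and simply move on, e.g.\ $v_2 \to t_2 \xrightarrow{1} f_2 \to v'_2 \to \cdots \to q$. This closes no cycle, so nothing punishes it. Yet it destroys your invariant: every later vertex is reached at a position $\equiv 1 \pmod k$, and every loop through gadget $2$ contains the extra edge of length $1$. Your claim that afterwards any answer ``hits a visited vertex at a position $\equiv 0$'' already fails for $f_2$, and it ignores queries to other gadgets.

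Concretely, take $r \geq 2$. Correct play forces each query path $q \to q_i \to t_i$ (and likewise to $f_i$) to have length $\equiv 0$. Let Adam choose $t_1$, deviate as above in gadget $2$, and query index $1$. If Eve answers $t_1$, the loop from $t_1$ contains the cross-edge and has length $\equiv 1$. If she answers $f_1$, Adam walks from the unvisited $f_1$ to the visited $v'_1$, again closing a loop of length $\equiv 1$. In both cases the head is $\equiv 0$, so Eve loses in both semantics. Hence Eve has no winning strategy at all in your $\G_r$, and the family proves nothing.

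Your ``safer alternative'' does not repair this. The arena is static, so the only vertices whose visited status at query time reflects Adam's assignment lie on his assignment branches, such as $t_i$ and $f_i$. Any edge leaving them is usable during the assignment phase. Suppose instead you put the length-$1$ edge into the opposite vertex on a fresh vertex of Eve's answer path. That vertex is unvisited after a correct answer too, so Adam can use it then. He enters the unvisited opposite vertex and follows its path to the visited $v'_i$. With your branch lengths, this closes a loop of exactly the punishing length. So ``unvisited exactly when the answer is wrong'' cannot hold for any vertex other than $t_i, f_i$ themselves.

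The missing idea is that \emph{Eve} must be able to punish a cross-edge deviation that continues onward. The paper does this by reusing the gadget of \Cref{clm:mod-k-gadget} unchanged. There, $v'_i$ is an Eve vertex with backtracking edges $v'_i \xrightarrow{k-1} t_i$ and $v'_i \xrightarrow{1} f_i$. After $t_i \xrightarrow{1} f_i \to v'_i$ she returns to $t_i$, and after $f_i \xrightarrow{k-1} t_i \to v'_i$ she returns to $f_i$. Either way she closes a loop of length $2k$ with head $\equiv 0$. Then every play reaching $q$ keeps all main-path positions $\equiv 0$, and the query edges can be plain paths of length $k$. Your pigeonhole argument then goes through exactly as in the paper: two assignments reach $q$ in the same memory state, Adam queries a differing index, and a cross-edge punishes the wrong answer with residue $1$ or $k-1$.
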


\begin{proof}
    Fix $k \geq 2$. For every $r \geq 1$, we construct a mod-$k$ game $\G_r$ from $r$ connected gadgets corresponding to the variables $v_1, \dots, v_r$. As before, the game has an \emph{assignment phase}, in which Adam assigns $L$ or $R$ to every variable, followed by a \emph{query phase}, in which Eve is tested on one of his choices. We use the gadgets from~\Cref{fig:mod-k-gadget}. The game $\G_3$ is shown in~\Cref{fig:mod-memory-lower}.

    \begin{figure}[hbt!]
        \centering
        \begin{tikzpicture}[
        circ/.style={circle, draw, minimum size=7mm, inner sep=0pt, fill=white},
        sqr/.style={rectangle, draw, minimum size=7mm, inner sep=0pt, fill=white},
        arr/.style={->, >=stealth, thick},
        line/.style={thick},
        weight/.style={font=\scriptsize}
    ]
    \def\hsep{4.8}

    \foreach \i [count=\c from 0] in {1, 2, 3} {
            \begin{scope}[xshift=\c * \hsep cm]
                \node[sqr] (v\i) at (0, 0) {$v_{\i}$};
                \node[sqr] (l\i) at (1.6, -1) {$v_{\i, L}$};
                \node[sqr] (r\i) at (1.6, 1) {$v_{\i, R}$};
                \node[circ] (a\i) at (3.2, 0) {$v'_{\i}$};
                \node[circ] (q\i) at (0, 2) {$q_{\i}$};

                \draw[arr] (v\i) -- (l\i);
                \draw[arr] (v\i) -- (r\i);
                \draw[arr] (l\i) -- (a\i);
                \draw[arr] (r\i) -- (a\i);

                \draw[arr] (l\i) to[bend left=10] node[weight, left] {$1$} (r\i);
                \draw[arr] (r\i) to[bend left=10] node[weight, right] {$k - 1$} (l\i);

                \draw[arr] (a\i) to[bend left=50] node[weight, right] {$k - 1$} (l\i);
                \draw[arr] (a\i) to[bend right=50] node[weight, right] {$1$} (r\i);

                \draw[arr] (q\i) -- (r\i);
                \draw[arr] (q\i) -- (l\i);
            \end{scope}
        }
    \node[sqr] (adam) at ([xshift=1.5cm]a3) {$q$};
    \coordinate (busLine) at (0, 2.75);

    \draw[arr] (a1) -- (v2);
    \draw[arr] (a2) -- (v3);
    \draw[arr] (a3) -- (adam);

    \draw[line] (adam.north) -- (adam.north |- busLine) -- (q1 |- busLine);

    \foreach \i in {1, 2, 3} {\draw[arr] (q\i |- busLine) -- (q\i);}
\end{tikzpicture}
        \caption{The game $\G_3$, with initial vertex $v_1$. Unlabelled edges represent paths of length $k$.}
        \label{fig:mod-memory-lower}
        \label{fig:even-memory-lower}
    \end{figure}

    Eve has a winning strategy $\strE$ using $2^r$ memory states. She keeps track of Adam's choices in every gadget. Any deviation by Adam can be punished as in~\Cref{clm:mod-k-gadget}. In the query phase, Adam selects a vertex $q_i$, and Eve takes the outgoing edge to the previously visited vertex $v_{i, d}$, where $d \in \{L, R\}$ is the value selected by Adam during the assignment phase. Both the resulting loop and the lasso have length divisible by $k$.

    Now suppose that Eve follows a strategy $\strE$ using fewer than $2^r$ memory states. Then two distinct assignments reach the query phase in the same memory state. Choose an index $i$ on which they differ, and let Adam select $q_i$. Eve chooses the same outgoing edge in both plays, so in one of them she chooses an edge leading to an unvisited vertex. Adam then takes the cross-path to the previously visited opposite vertex. The resulting loop has length $1$ or $k - 1$ modulo $k$. The head has length divisible by $k$, so the lasso has the same non-zero residue. Thus Eve loses in both semantics, contradicting the assumption that $\strE$ is winning.

    The arena has $n = \mathcal{O}(kr)$ vertices. Therefore, every winning strategy requires at least $2^r = 2^{\Omega(n / k)} = 2^{\Omega(n)}$ memory states for every fixed $k$.
\end{proof}

\subsection*{Acknowledgment}
This work was supported by the National Science Centre, Poland (grant no.~2021/41/B/ST6/03914).

\bibliographystyle{eptcs}
\bibliography{bibliography}

\end{document}